\newtheorem{example}{Example}
\newtheorem{definition}{Definition}
\newtheorem{theorem}{Theorem}
\begin{document}

\title{Co-occurrence order-preserving pattern mining}


\author{Youxi Wu}
\affiliation{%
  \institution{School of Artificial Intelligence, Hebei University of Technology}
  \city{Tianjin}
  \country{China}
  \postcode{300401}
}
\email{wuc567@163.com}

	\author{Zhen Wang}
	\affiliation{%
		\institution{School of Artificial Intelligence, Hebei University of Technology}
		\city{Tianjin}
		\country{China}
		\postcode{300401}
	}
 	\affiliation{%
		\institution{ ShiJiaZhuang Posts and Telecommunications Technical College}
		\city{ShiJiaZhuang}
		\country{China}
		\postcode{050021}
	}
 \email{17302297683@163.com}  
 
\author{Yan Li}
\authornote{Corresponding authors.}
\affiliation{%
	\institution{School of Economics and Management, Hebei University of Technology}
	\city{Tianjin}
	\country{China}
	\postcode{300401}
}
\email{lywuc@163.com}

\author{Yingchun Guo}
\affiliation{%
	\institution{School of Artificial Intelligence, Hebei University of Technology}
	\city{Tianjin}
	\country{China}
	\postcode{300401}
}
\email{gyc@scse.hebut.edu.cn}

	\author{He Jiang}
\affiliation{%
	\institution{School of Software, Dalian University of Technology}
	\city{Dalian}
	\country{China}
	\postcode{116023}
}
\email{jianghe@dlut.edu.cn}

\author{Xingquan Zhu}
\affiliation{%
	\institution{Department of Computer \& Electrical Engineering and Computer Science, Florida Atlantic University}
	\city{FL}
	\country{USA}
	\postcode{33431}
}
\email{xzhu3@fau.edu}

\author{Xindong Wu}
	\affiliation{%
\institution{Key Laboratory of Knowledge Engineering with Big Data (the Ministry of Education of China), Hefei University of Technology}
\city{Hefei}
\country{China}
\postcode{230009} 
}
 \email{xwu@hfut.edu.cn}

\renewcommand{\shortauthors}{Y. Wu et al.}

\begin{abstract}
  Recently, order-preserving pattern (OPP) mining has been proposed to discover some patterns, which can be seen as trend changes in time series. Although existing OPP mining algorithms have achieved satisfactory performance, they discover all frequent patterns. However, in some cases, users focus on a particular trend and its associated trends. To efficiently discover trend information related to a specific prefix pattern, this paper addresses the issue of co-occurrence OPP mining (COP) and proposes an algorithm named COP-Miner to discover COPs from historical time series. COP-Miner consists of three parts: extracting keypoints, preparation stage, and iteratively calculating supports and mining frequent COPs. Extracting keypoints is used to obtain local extreme points of patterns and time series. The preparation stage is designed to prepare for the first round of mining, which contains four steps: obtaining the suffix OPP of the keypoint sub-time series, calculating the occurrences of the suffix OPP, verifying the occurrences of the keypoint sub-time series, and calculating the occurrences of all fusion patterns of the keypoint sub-time series. To further improve the efficiency of support calculation, we propose a support calculation method with an ending strategy that uses the occurrences of prefix and suffix patterns to calculate the occurrences of superpatterns. Experimental results indicate that COP-Miner outperforms the other competing algorithms in running time and scalability. Moreover, COPs with keypoint alignment yield better prediction performance. 
\end{abstract}

\begin{CCSXML}
<ccs2012>
 <concept>
  <concept_id>10010520.10010553.10010562</concept_id>
  <concept_desc>Computer systems organization~Embedded systems</concept_desc>
  <concept_significance>500</concept_significance>
 </concept>
 <concept>
  <concept_id>10010520.10010575.10010755</concept_id>
  <concept_desc>Computer systems organization~Redundancy</concept_desc>
  <concept_significance>300</concept_significance>
 </concept>
 <concept>
  <concept_id>10010520.10010553.10010554</concept_id>
  <concept_desc>Computer systems organization~Robotics</concept_desc>
  <concept_significance>100</concept_significance>
 </concept>
 <concept>
  <concept_id>10003033.10003083.10003095</concept_id>
  <concept_desc>Networks~Network reliability</concept_desc>
  <concept_significance>100</concept_significance>
 </concept>
</ccs2012>
\end{CCSXML}

\ccsdesc[500]{Theory of computation~Design and analysis of algorithms}
\ccsdesc[500]{Computing methodologies~Knowledge representation and reasoning}

\keywords{pattern mining, time series, keypoint alignment, order-preserving, co-occurrence pattern}


\maketitle

\section{Introduction}
A time series is a set of numerical values derived from consecutive records over a fixed time interval. Time series exist in many fields, such as market prediction \cite{marketprediction},  taxi demand prediction \cite{taxiprediction}, stock price prediction \cite{stockpriceprediction, temperatureprediction}, and brain EEG analysis \cite{braineeg}. Mining hidden information in time series can be used for big data analysis techniques \cite{wutmis2022}, such as clustering \cite{Hanclustering2022}, classification \cite{deepforest}, pattern recognition \cite{Lirecognition2011, Yanrecognition2020} and predication \cite{Leeprediction2022}. Therefore, time series mining methods are significant and various algorithms have been proposed \cite{Gharghabi1timeseries2020, Gharghabi2timeseries2018}. For example, inspired by the three-way decisions theory \cite{Zhan2threeway2022,seqthreeway}, the NTP-Miner algorithm converts time series to characters and divides the alphabet into three regions: strong, medium, and weak, which can be applied to cluster analysis \cite{WuNTP-Miner2022}. The TBOPE classification algorithm builds an ensemble classifier that can improve classification performance \cite{Baiclassification2021}. The OWSP-Miner algorithm can recognize self-adaptive one-off weak-gap strong patterns, which are more valuable in real life \cite{WuOWSP-Miner2022}. Most time series mining must use representation methods to convert time series into discretized characters \cite{pyramidpattern}. Commonly used methods are piecewise linear approximation (PLA) \cite{StoracePiecewise2004}, piecewise aggregate approximation (PAA) \cite{ChakrabartiLocally2002}, and symbolic aggregate approximation (SAX) \cite{LinExperiencing2007}. However, the above-mentioned symbolization methods easily lead to the loss of trend information inside the original time series. 

Recently, a method without symbolization named order-preserving pattern (OPP) matching has been proposed which can find all occurrences of a given OPP on numeric strings \cite{Kimmatching2014}. An OPP can represent a trend, which is a relative order that can effectively reflect fluctuations without discretizing the values into symbols \cite {copp}. Inspired by OPP matching, a new sequential pattern mining method, named OPP mining was proposed \cite{WuOPP-Miner2022}, which can discover frequent OPPs in time series. To improve the mining performance and discover all strong rules,  order-preserving rule mining was explored \cite{WuOPR-Miner2022}. However, there are two main issues in the current research. 

(1) The existing algorithms cannot cope with distortion and mine redundant patterns, since these algorithms mine directly on the original time series. For example, an OPP is (1,2,3,4,5), which represents an uptrend and can be replaced by the already discovered pattern (1,2), since (1,2) also represents an uptrend. Therefore, it is necessary to discover fewer and more targeted patterns that are more meaningful and unaffected by distortion.

(2) More importantly, the existing algorithms, such as OPP-Miner \cite{WuOPP-Miner2022} and OPR-Miner \cite{WuOPR-Miner2022}, mine all of the patterns and use them for time series clustering and classification. But, in some cases, researchers want to discover patterns related to a specific pattern in historical data, and then utilize a summary of past trends to predict future changes. In this case, mining all the patterns would yield numerous meaningless results. 

To tackle the above two issues, two methods need to be employed, respectively. For distortion, the keypoint alignment method \cite{Lahrechekeypoints2021} can be used.  Thus, to reduce distortion interference in time series, we do not mine on the original time series, but rather on the time series after extracting keypoints. Moreover, to avoid mining redundant patterns, co-occurrence pattern mining can be selected \cite {danguoco, mcor}. The reason is as follows.  Although top-\textit{k} pattern mining \cite{Dong2Top-k2019, Gao3Top-k2016}, closed pattern mining  \cite{Wu1closed2020, Truong2closed2019}, and maximal pattern mining \cite{Li1maximal2022, Vo2maximal2017} can reduce the number of patterns, users may be interested in a part of patterns, rather than all patterns, since all patterns are not sufficiently targeted. For example, when users know a specific prefix pattern in advance to make predictions, they only want to discover patterns with the same specific prefix pattern, and this mining method is called co-occurrence pattern mining \cite {danguoco, mcor}. This problem still exists in time series. For example, in time series prediction, users hope to use historical data to predict future values. Therefore, users know the latest values that can be further extracted as a prefix pattern. Based on the prefix pattern, users can predict future values. Thus, co-occurrence pattern mining is more targeted.   An illustrative example is as follows.


\begin{example}
	Suppose we have a time series \textbf{t} = (16,8,11,10,12,16,17,13,20,18,21,22,18,14,21,24,23,27,25) shown in Fig.~\ref{timeseries}  and a predefined support threshold \textit{minsup}=3. According to OPP-Miner, there are seven frequent OPPs: $\{$(1,2), (2,1), (1,2,3), (1,3,2), (2,1,3), (1,3,2,4), (2,1,3,4)$\}$. Obviously, it is difficult for users to apply these trends information directly. We know that the latest  three values of \textbf{t} are ($t_{17},t_{18},t_{19}$) = (23,27,25) whose relative order is (1,3,2), since $t_{17}$=23 is the smallest value, $t_{18}$=27 is the third smallest value, and $t_{19}$=25 is the second smallest value. Moreover, it is easy to know OPP (1,3,2) occurs four times in \textbf{t}: ($t_2,t_3,t_4$),  ($t_8,t_9,t_{10}$), ($t_{15},t_{16},t_{17}$), and ($t_{17},t_{18},t_{19}$). Thus, (1,3,2) is a frequent OPP. Hence, when we make a prediction, actually, we know the specific prefix pattern.

	\begin{figure}[htbp]
		\centering
		\includegraphics[width=0.75\linewidth]{"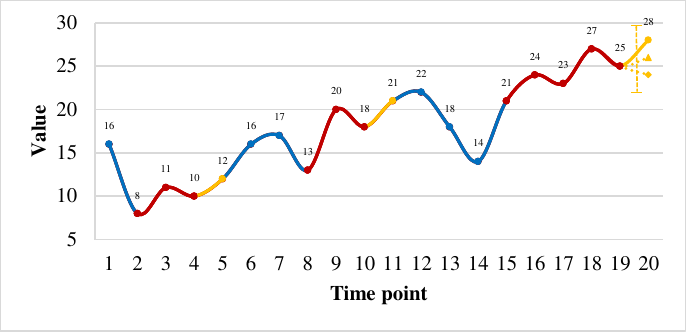"}
		\caption{Time series \textbf{t}. The relative order of sub-time series ($t_2,t_3,t_4$) = (8,11,10) is (1,3,2), since $t_2$=8 is the smallest value, $t_3$=11 is the third smallest value, and $t_4$=10 is the second smallest value. Similarly, the relative orders of  sub-time series ($t_8,t_9,t_{10}$),  ($t_{15},t_{16},t_{17}$), and ($t_{17},t_{18},t_{19}$) are also (1,3,2). Assuming that we know the values $t_{1}$ to $t_{19}$, we can discover the potential trends, and apply these trends and the recent values, such as ($t_{17},t_{18},t_{19}$) to predict the range of $t_{20}$.}
		\label{timeseries}
	\end{figure}

 Now, we use the specific prefix pattern (1,3,2) to predict the further trend. Therefore, we only need to discover the patterns whose prefix pattern is (1,3,2). These discovered patterns are called co-occurrence order-preserving patterns (COPs).  According to historical information, we know that (1,3,2,4) is also a frequent OPP and its prefix pattern is (1,3,2). Hence, (1,3,2,4) is a COP. Thus,  compared with seven frequent OPPs, COP (1,3,2,4) is more targeted in predicting future trends. For example, $t_{20}$ is 28, which is greater than $t_{18}$. Thus, the relative order of ($t_{17},t_{18},t_{19}$, $t_{20}$) is (1,3,2,4), which means that the prediction is correct. If $t_{20}$ is less than $t_{18}$, then the prediction is not correct. 
 
 \label{Example 1}
\end{example}

From the above description, mining COPs with keypoint alignment for time series prediction is meaningful. The main contributions are as follows.

\begin{itemize}
		
	\item To avoid mining irrelevant trends and to obtain better prediction performance for time series, we explore COP mining, which can mine all COPs with the same prefix pattern, and we propose the COP-Miner algorithm.
	
	\item COP-Miner is composed of three parts: extracting keypoints to reduce distortion interference and avoid mining redundant patterns, preparation stage to prepare for the first round of mining, and iteratively calculating supports of superpatterns and mining frequent COPs.

\item We propose the concept of fusion pattern which can effectively reduce the number of candidate patterns.

\item To further improve the efficiency of support calculation, we propose a support calculation method with an ending strategy which uses the occurrences of prefix and suffix patterns to calculate the occurrences of superpatterns. 
 
	\item To validate the performance of COP-Miner, we select ten competing algorithms and nine real time series datasets. Our experimental results verify that COP-Miner outperforms other competing algorithms and COPs with keypoint alignment yield better prediction performance.
	
\end{itemize}

This paper is organized as follows. Section \ref{section2} gives an overview of related work. Section \ref{section3} defines the problem. Section \ref{section4} proposes the COP-Miner algorithm. Section \ref{section5} reports the performance of COP-Miner. Section \ref{section6} concludes this paper.
	
\section{Related work} \label {section2}
Sequential pattern mining (SPM) is an important research direction in big data analysis \cite{patternmining, wangapin, wu2014apin, Mavroudopoulos2022}, which is widely used in bioinformatics \cite{Zhangbioinformatics2021}, keyword extraction \cite{keywu2017, oneoffguo}, feature selection \cite{Wu1Top-k2021,  gan2022, featureselection}, and sequence classification \cite {zengyouhe, li2012tkdd, philippe}. To meet different needs in real life, various pattern mining methods have been proposed \cite {gantkdd}, such as episode pattern mining \cite{Aoepisode2019}, negative SPM \cite{wunegativetkdd}, three-way SPM \cite{Min1threeway2020, threeway}, contrast SPM \cite{copp, Licontrast2020}, high-utility SPM \cite{gengmengkbs, Nawtmis, ganutil}, spatial co-location pattern mining \cite{Wang1colocation2022, Wang2colocation2018}, and gap constraints SPM \cite {nosep}. However, most of these schemes aim to discover all patterns, which leads to numerous information that is not of interest to users. To reduce the number of mined patterns, top-\textit{k} SPM \cite{Dong2Top-k2019, Gao3Top-k2016}, closed SPM  \cite{Wu1closed2020, Truong2closed2019}, and maximal SPM \cite{Li1maximal2022, Vo2maximal2017} have been proposed. However, in some cases, users have certain prior knowledge and want to obtain patterns with a specific prefix pattern. To solve this problem, some co-occurrence SPM methods were proposed, such as maximal co-occurrence SPM\cite{mcor} and top-\textit{k} co-occurrence patterns mining \cite{Amagata1co-occurrence2019}. 

However, the above methods mainly focused on mining character sequences (such as biological and customer purchase behavior sequences), which are difficult to apply directly to time series (such as temperature changes and stock trading volumes). At present, the prevailing solution is to discretize the values into symbols by symbolic methods, such as PLA \cite{StoracePiecewise2004}, PAA \cite{ChakrabartiLocally2002}, and SAX \cite{LinExperiencing2007}, and then apply the above SPM algorithms to find frequent patterns in time series. But the drawbacks of symbolization cannot be ignored. First of all, the symbolization of time series may bring some noise. More importantly, it is difficult to discover the potential trends in time series, since the symbolization methods convert the real values into different symbols which means that these methods focus too much on specific values and ignore trends.

We have noticed a new approach in the field of pattern matching named OPP matching, which does not require symbolization of time series. For example, Kim et al. \cite{Kimmatching2014} proposed an OPP matching solution and indicated trends based on the relative order of numerical values. Moreover, Kim et al. \cite{oppmatchingscaling} studied an approximate order-preserving pattern matching problem with scaling.  
Inspired by OPP matching, our previous work proposed the OPP-Miner algorithm to discover all frequent OPPs \cite{WuOPP-Miner2022} and further proposed the OPR-Miner algorithm to mine strong order-preserving rules from all frequent OPPs \cite{WuOPR-Miner2022}. The common characteristic of OPP-Miner and OPR-Miner is the discovery of all frequent patterns.

To mine patterns with user-specified prefix pattern, inspired by co-occurrence pattern mining \cite{danguoco, mcor}, we present a COP mining scheme for time series prediction. COP mining can discover fewer and more targeted patterns.
Although we can employ OPP-Miner \cite{WuOPP-Miner2022} or EFO-Miner \cite{WuOPR-Miner2022} to discover all frequent OPPs at first, and then select the COPs, it is obviously inefficient and belongs to the brute-force approach, since this approach will discover many irrelevant patterns. Therefore, to discover COPs efficiently, we specially design an algorithm named COP-Miner.

\section{Problem definition}  \label {section3}
\textcolor {black}{This section presents the related concepts of COPs with specific prefix patterns for time series. }

\begin{definition}
\rm{\textit{\textbf{(Original time series)}}} An original time series is a finite sequence of numerical values derived from consecutive records over a fixed time interval, denoted by \textbf{t} = ($t_1,…, t_i,…, t_N$), where 1 $\leqslant i \leqslant N$.
\label{Dedinition 1}
\end{definition}

Considering that a time series can be disturbed easily by distortion, resulting in warping, scaling, and data perturbation, we extract the keypoints of a time series to eliminate distortion.

\begin{definition}
	\rm{\textit{\textbf{(Keypoint time series)}}} For a time series \textbf{t}, $t_i$ is a keypoint, iff it is the local minimum or local maximum and satisfies the following conditions \cite{Lahrechekeypoints2021}. The new time series \textbf{k}, called  keypoint time series, is  a time series composed of all keypoints in \textbf{t}.
	
	\begin{enumerate}
		\item [(1)] If \textit{i}=1 or \textit{i}=\textit{N}, then $t_{i}$ is the first or last point.
		\item [(2)] If $t_i < t_{i-1}$ and $t_i \leqslant t_{i+1}$, or $t_i \leqslant t_{i-1}$ and $t_i < t_{i+1}$, then $t_i$ is a local minimum point.
		\item [(3)] If $t_i > t_{i-1}$ and $t_i \geqslant t_{i+1}$, or $t_i \geqslant t_{i-1}$ and $t_i > t_{i+1}$, then $t_i$ is a local maximum point.
	\end{enumerate}
	\label{Definition 2}
\end{definition}

\textcolor {black}{An illustrative example of time series and keypoint time series is shown as follows.} 

\begin{example}
	Suppose we have a time series \textbf{t} = (16,8,11,10,12,16,17,13,20,18,21,22,18,14,21,24,23,27, 25,28). $t_1$=16 is a keypoint, since it is the first point. $t_2$=8 is a keypoint, since $t_2$=8 $<$ $t_1$=16 and $t_2$=8 $<$ $t_3$=11. However, $t_5$ is not a keypoint, since  $t_4$=10 $<$ $t_5$=12 $<$ $t_6$=16. After extracting all keypoints, \textbf{k} = (16,8,11,10,17,13,20,18,22,14,24,23,27,25,28).
	\label{Example 2}
\end{example}

\begin{definition}
	\rm{\textit{\textbf{(Sub-time series and keypoint sub-time series)}}} Sub-time series \textbf{p} is also a group of numerical values ($p_1,\cdots, p_j,\cdots, p_M$) (1 $\leqslant j \leqslant M$). After extracting keypoints, \textbf{q} = ($q_1,\cdots, q_j,\cdots, q_m$) (1 $\leqslant j \leqslant m$) is a keypoint sub-time series of \textbf{p}.
	\label{Definition 3}
\end{definition}

\begin{definition}
	\rm{\textit{\textbf{(OPP)}}} The rank of an element $q_j$ in a sub-time series \textbf{q} is its rank order, denoted by $o_\textbf{q}(q_j)$. An order-preserving pattern (OPP) is represented by the relative order of \textbf{q}, described by \textbf{o} = \textit{R}(\textbf{q}) = ($o_\textbf{q}(q_1), o_\textbf{q}(q_2),\cdots, o_\textbf{q}(q_m)$).
	\label{Definition 4}
\end{definition}    

\begin{example}   \label{Example 3}
Suppose we have a sub-time series \textbf{p} = (5,3,7,13,8). After extracting keypoints of \textbf{p}, the new keypoints sub-time series \textbf{q} is (5,3,13,8). Since 5 is the second smallest value in \textbf{q}, its rank order is two, i.e., $o_\textbf{q}$(5)=2. Similarly, $o_\textbf{q}$(3)=1, $o_\textbf{q}$(13)=4, and $o_\textbf{q}$(8)=3. The OPP of \textbf{q} is \textit{R}(\textbf{q}) = (2,1,4,3).
	
\end{example}

\textcolor{black}{Now, we give the definitions of occurrence, support, and frequent OPP.}   

\begin{definition}
\rm{\textit{\textbf{(Occurrence and support)}}} For a keypoint time series \textbf{k} = ($k_1,\cdots, k_i,\cdots, k_n$) and  an OPP \textbf{o} = ($o_1,\cdots, o_j,\cdots, o_m$), $\left\langle\textit{i}\right\rangle$ is an occurrence of  \textbf{o} in \textbf{k}, iff \textit{R}(\textbf{k}’) = \textbf{o}, where \textbf{k}’ is a sub-time series ($k_{i-\textit{m}+1}, k_{i-\textit{m}+2}$, $\cdots$, $k_{i}$) ($1 \leqslant i-\textit{m}+1$ and $i \leqslant n$). The number of occurrences of \textbf{o} in \textbf{k} is called the support, denoted by \textit{sup}(\textbf{o}, \textbf{k}).
	\label{Definition 5}
\end{definition}

\begin{definition}
\rm{\textit{\textbf{(Frequent OPP)}}} If the support of OPP \textbf{o} in \textbf{k} is no less than the user-specified support threshold \textit{minsup} (i.e., $sup(\textbf{o},\textbf{k}) \geqslant minsup$), then \textbf{o} is a frequent OPP.
	\label{Definition 6}
\end{definition}

\begin{example}
	Suppose we have the same keypoint time series \textbf{k} as in Example ~\ref{Example 2} and keypoint sub-time series \textbf{q} = (5,3,13,8), and we set \textit{minsup}=3. According to Definition ~\ref{Definition 4}, \textbf{o} = \textit{R}(\textbf{q}) = (2,1,4,3). The relative order of  sub-time series ($k_3, k_4, k_5, k_6$) = (11,10,17,13) are (2,1,4,3). Similarly, $P(k_5, k_6, k_7, k_8)$ = $P(k_9, k_{10}, k_{11}, k_{12})$ = $P(k_{11}, k_{12}, k_{13}, k_{14})$ = (2,1,4,3). Therefore, $\left\langle\textit{6}\right\rangle$, $\left\langle\textit{8}\right\rangle$, $\left\langle\textit{12}\right\rangle$, and $\left\langle\textit{14}\right\rangle$ are four occurrences of \textbf{o} in \textbf{k}. Thus, \textit{sup}(\textbf{o}, \textbf{k}) = 4 is no less than \textit{minsup}. Hence, \textbf{o} = (2,1,4,3) is a frequent OPP.
	\label{Example 4}
\end{example}

\textcolor{black}{Based on the definition of frequent OPP, we present the definitions of prefix OPP, COP, and our problem.}   

\begin{definition}
	\rm{\textit{\textbf{(Prefix OPP and suffix OPP, order-preserving subpattern and order-preserving superpattern)}}} For an OPP  \textbf{x} = ($x_1, x_2,…, x_m$), pattern \textbf{e} = $P(x_1, x_2,…, x_{m-1})$ is the prefix OPP of \textbf{x}, denoted by \textbf{e} = \textit{{prefixorder}}(\textbf{x}), and pattern \textbf{s} = $P(x_2, x_3,…, x_m)$ is the suffix OPP of \textbf{x}, denoted by \textbf{s} = \textit{{suffixorder}}(\textbf{x}). Moreover, \textbf{e} and \textbf{s} are the order-preserving subpatterns of \textbf{x}, and \textbf{x} is the order-preserving superpattern of \textbf{e} and \textbf{s}.
	\label{Definition 7}
\end{definition}

\begin{definition}
	\rm{\textit{\textbf{(COP)}}}  Suppose we have a frequent OPP \textbf{c} = $(c_1, c_2,…, c_k)$. If the relative order of $(c_1, c_2,…, c_m)$ ($m<k$) is \textbf{o} = $(o_1, o_2,…, o_m)$, then pattern \textbf{c} is  a COP of  \textbf{o}.
	\label{Definition 8}
\end{definition}


\begin{definition}
  \rm{\textit{\textbf{(Problem statement)}}} Suppose we have a time series \textbf{t}, a user-specified minimum support threshold \textit{minsup}, and a sub-time series \textbf{p}. After extracting keypoints, we obtain keypoint time series \textbf{k} and  keypoint sub-time series \textbf{q}. Then, we get OPP \textbf{o}=\textit{R}(\textbf{q}). Our goal is to discover all  frequent COPs of  pattern  \textbf{o} in \textbf{k}.
	\label{Definition 9}
\end{definition}

\begin{example}
	We use the same keypoint time series \textbf{k} as in Example ~\ref{Example 2} and  keypoint sub-time series \textbf{q} = (5,3,13,8), and we set \textit{minsup}=3. According to Exampel ~\ref{Example 4}, \textit{R}(\textbf{q}) = (2,1,4,3). According to Definition ~\ref{Definition 5}, we know that $\left\langle\textit{7}\right\rangle$, $\left\langle\textit{9}\right\rangle$, $\left\langle\textit{13}\right\rangle$, and $\left\langle\textit{15}\right\rangle$ are four occurrences of pattern \textbf{c} = (2,1,4,3,5). Thus, pattern \textbf{c} is a frequent COP of \textbf{q}, since its prefix OPP is  (2,1,4,3) which is the same as \textit{R}(\textbf{q}).
	\label{Example 5}
\end{example}

For clarification, the used symbols are listed in Table ~\ref{tab1}.

\begin{table}[htbp]
	\centering
		\scriptsize
		\caption{Notation description}
		\resizebox{9cm}{!}{
			\begin{tabular}{ll}
				\toprule
				Symbol & Description  \\
				\midrule    
				\textbf{t} = $(t_1,\cdots, t_i,\cdots, t_N)$ & Original time series \\
				\textbf{k} = $(k_1,\cdots, k_i,\cdots, k_n)$ & Keypoint time series \\
				\textbf{p} = $(p_1,\cdots, p_j,\cdots, p_M)$ & User-specified sub-time series \\
				\textbf{q} = $(q_1,\cdots, q_j,\cdots, q_m)$ & Keypoint sub-time series \\
                    \textbf{o} = $(o_1,\cdots, o_j,\cdots, o_m)$ & A frequent OPP, \textbf{o} = \textit{R}(\textbf{q})\\
                    $Occ_\textbf{o}$ = $\{go_1, \cdots, go_i, \cdots, go_y\}$ & Occurrences of OPP \textbf{o} \\
				\textbf{e} = \textit{{prefixorder}}(\textbf{x}) & Prefix OPP of \textbf{x} \\
				\textbf{s} = \textit{{suffixorder}}(\textbf{x}) & Suffix OPP of \textbf{x} \\
				\textbf{b} = $(b_1,\cdots, b_i,\cdots, b_{n-1})$ & Binary sequence after time series \textbf{k} conversion \\
				\textbf{d} = $(d_1,\cdots, d_j,\cdots, d_{m-1})$ & Binary sequence after pattern \textbf{s} conversion \\
				\textbf{u} = $(u_1,\cdots, u_i,\cdots, u_{m+1})$ & \textbf{u} = \textbf{o}$\oplus$\textbf{f}, superpattern with length \textit{m}+1 \\
				\textbf{v} = $(v_1,\cdots, v_i,\cdots, v_{m+1})$ & \textbf{v} = \textbf{o}$\oplus$\textbf{f}, superpattern with length \textit{m}+1 \\
				\textbf{c} = $(c_1, c_2,\cdots, c_k)$ & COP of pattern \textbf{o} \\ 
    
				\bottomrule
		\end{tabular}}
		\label{tab1} 
	\end{table}

\section{Proposed algorithm} \label {section4}

\textcolor {black}{ Although OPP-Miner \cite{WuOPP-Miner2022} and EFO-Miner \cite{WuOPR-Miner2022} can discover all frequent patterns, for COP mining, it is inefficient to obtain all frequent patterns and then to discover COPs.   To tackle COP mining, we propose COP-Miner, which has three parts: a local extremum point (LEP) algorithm to extract keypoints shown in Section \ref{subsect4.1}, a preparation stage to realize the first round of mining shown in Section \ref{subsect4.2}, and iteratively calculating supports and mining frequent COPs shown in Section \ref{subsect4.3}. Fig. ~\ref{FrameWork} shows the framework of COP-Miner and the COP-Miner algorithm and its time and space complexities analysis are shown in Section \ref{subsect4.4}.}

\begin{figure}[htbp]
	\centering
	\includegraphics[width=0.99\linewidth]{"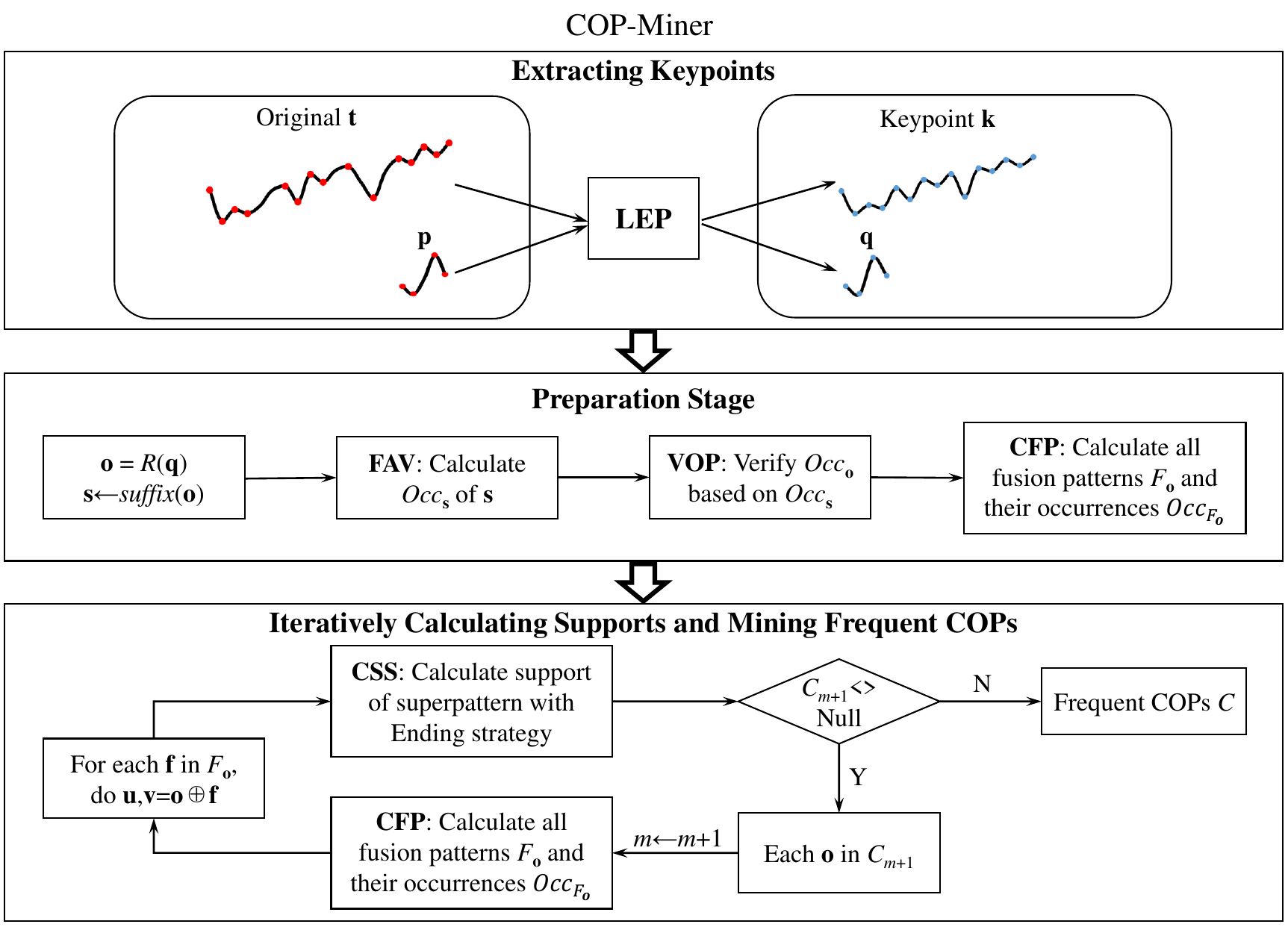"}
	\caption{Framework of COP-Miner. There are three parts: extracting keypoints, preparation stage, and iteratively calculating supports and mining frequent COPs whose key issue is to calculate supports of superpatterns.}
	\label{FrameWork}
\end{figure}

\subsection{Extracting keypoints} \label {subsect4.1}

Traditional OPP mining operates directly on the original time series to find sub-time series with the same trend. But there are two issues that should be considered. On the one hand, there is a lot of redundant information in the mining results. For example, OPP (1,2) indicates an upward trend, and  OPP (1,2,3) also indicates a continuous upward trend. On the other hand, point-by-point analysis of time series cannot cope with sequences in noisy environments like scaling, warping, and shifting. Therefore, the first key task in COP mining is to extract keypoints. Hence, we extract keypoints according to Definition \ref{Definition 2}, and the pseudo-code of the local extremum point (LEP) algorithm is shown in Algorithm ~\ref{Algom 1}.

\begin{algorithm}[htbp]
	\caption{LEP} 
	\label{Algom 1}
		\hspace*{0.02in} \leftline{{\bf Input:} Time series \textbf{t}}\\
		\hspace*{0.02in} \leftline{{\bf Output:} Keypoint time series \textbf{k}}\\		
	\begin{algorithmic}[1]
		\STATE Append $t_1$ to \textbf{k};
		\FOR {$i$= 2 to $N-1$ } 
		\IF {($t_i<t_{i-1} \wedge t_i \leqslant t_{i+1}) \| (t_i \leqslant t_{i-1} \wedge t_i<t_{i+1})$}
		\STATE Append $t_i$ to \textbf{k}; 
		\ELSIF {($t_i>t_{i-1} \wedge t_i \geqslant t_{i+1}) \| (t_i \geqslant t_{i-1} \wedge t_i>t_{i+1})$}
		\STATE Append $t_i$ to \textbf{k}; 
		\ENDIF
		\ENDFOR
            \STATE  Append $t_N$ to \textbf{k};
		\RETURN \textbf{k};
	\end{algorithmic}
\end{algorithm}

\subsection{Preparation stage}\label {subsect4.2}

This section is designed to prepare for the first round of mining. The related definitions and examples are as follows.

\begin{definition}\label{Definition 10}
	\rm{\textit{\textbf{(Fusion pattern)}}} For a keypoint sub-time series \textbf{q} and its corresponding OPP \textbf{o} = \textit{R}(\textbf{q}), if \textit{{suffixorder}}(\textbf{o}) = \textit{{prefixorder}}(\textbf{f}), then \textbf{f} = ($f_1, f_2,…, f_{m-1}, f_m$) is called a fusion pattern of OPP \textbf{o}.
\end{definition}

There are several methods to calculate the occurrences of the prefix OPP and suffix OPP.

\begin{enumerate}
\item [1)] A simple method is to enumerate all possible patterns and to use a pattern matching method to calculate the supports of these patterns. The principle of the enumeration method is as follows. We enumerate all fusion patterns. Suppose the length of \textit{{suffixorder}(\textbf{o})} is \textit{m}-1. It is easy to know there are \textit{m} fusion patterns. Thus, we need to calculate the occurrences of \textit{m}+1 patterns using pattern matching method. Obviously, this is a brute-force method.

\item [2)]  An efficient method is to reduce the number of support calculations using the pattern matching method, since this is the most time-consuming step. To tackle this issue effectively, according to the principle of EFO-Miner \cite{WuOPR-Miner2022}, we need to calculate the occurrences of pattern \textit{{suffixorder}}(\textbf{o}) and its fusion  patterns.

\end{enumerate}

In this paper, we adopt the second method. For example, suppose the user specified a sub-time series (5,3,7,13,8). After extracting keypoints, the corresponding keypoint sub-time series is (5,3,13,8). \textbf{o} = \textit{R}(\textbf{q}) = (2,1,4,3), which indicates that \textbf{o} is the prefix pattern specified by user, and \textbf{o} is an OPP corresponding to keypoint sub-time series. Firstly, we need to obtain \textbf{o}'s suffix OPP (1,3,2), since this is the common part of prefix pattern and its fusion patterns. Secondly, we employ the pattern matching method to calculate the occurrences of pattern (1,3,2). Finally, according to the principle of EFO-Miner \cite{WuOPR-Miner2022}, to calculate the supports of candidate superpatterns, we calculate the occurrences of the prefix pattern, i.e., the occurrences of pattern (2,1,4,3) and the occurrences of the suffix patterns, i.e., the occurrences of all fusion patterns of pattern (2,1,4,3).

Therefore, in the preparation stage, we have four steps: obtaining \textbf{s} = \textit{suffixorder}(\textbf{o}), calculating the occurrences $Occ_\textbf{s}$ of pattern \textbf{s}, verifying the occurrences $Occ_\textbf{o}$ of pattern \textbf{o} based on $Occ_\textbf{s}$, and calculating the occurrences of all fusion patterns of pattern \textbf{o}.		

\textbf{Step 1.}  Obtain suffix OPP \textbf{s} of pattern \textbf{o} according to Definition ~\ref{Definition 7}.

\begin{example}
	Suppose we have a pattern \textbf{o} = (2,1,4,3). Thus, its suffix is (1,4,3), and the corresponding suffix OPP is \textbf{s} = (1,3,2).
	\label{Example 6}
\end{example}

\textbf{Step 2.} Use the filtration and verification (FAV) algorithm \cite{WuOPP-Miner2022} to calculate the occurrences $Occ_\textbf{s}$ of pattern \textbf{s}. The FAV has three steps: filtration, SBNDM$_{2}$ matching, and verification. The filtration strategy is adopted to convert the time series and pattern into a binary sequence and a binary pattern. SBNDM$_{2}$ matching  \cite{DurianImproving2010} is used to find the possible occurrences of a binary pattern in a binary sequence. The verification strategy is employed to check whether the possible occurrences are occurrences. \textcolor {black}{This step is called filtration and verification (FAV).}

Example \ref{FAV} illustrates the principle of FAV.

\begin{example} \label {FAV}
Suppose we have the same keypoint time series \textbf{k} as in Example ~\ref{Example 2} and \textbf{s} = (1,3,2). Table ~\ref{tab2} shows the element index of \textbf{k}. According to the filtration strategy, keypoint time series \textbf{k} is transformed into a binary sequence \textbf{b}. If $k_i<k_{i+1}$, then $b_i$=1; otherwise, $b_i$=0. 
For example, $k_1$ and $k_{2}$ are 16 and 8, respectively. Thus, $k_1 > k_{2}$, i.e., $b_1$=0. Similarly, $b_2$=1. Finally, \textbf{b} = (0,1,0,1,0,1,0,1,0,1,0,1,0,1). Moreover, $s_1 < s_{2}$, since  $s_1$ and $s_2$ are 1 and 3, respectively. Thus, $d_1$=1. Similarly, $d_2$=0. Therefore, \textbf{s} is transformed into \textbf{d} = (1,0). Then, according to SBNDM$_2$ matching, we know that $\left\langle\textit{3}\right\rangle$  is a possible occurrence of pattern \textbf{d} in sequence \textbf{b}. Similarly, $\left\langle\textit{5}\right\rangle$, $\left\langle\textit{7}\right\rangle$, $\left\langle\textit{9}\right\rangle$, $\left\langle\textit{11}\right\rangle$, and $\left\langle\textit{13}\right\rangle$ are also possible occurrences. Finally, according to the verification strategy, we verify whether $\left\langle\textit{3}\right\rangle$ is an occurrence. For pattern \textbf{d} = (1,0), $\left\langle\textit{3}\right\rangle$ in \textbf{b} means ($b_2, b_3$) whose corresponding sub-time series in \textbf{k} is ($k_2, k_3, k_4$). $\left\langle\textit{4}\right\rangle$ is the occurrence of pattern \textbf{s} in \textbf{k}, since $k_2<k_4$, which is consistent with $s_1<s_3$. Similarly, $\left\langle\textit{6}\right\rangle$, $\left\langle\textit{8}\right\rangle$, $\left\langle\textit{12}\right\rangle$, and $\left\langle\textit{14}\right\rangle$ are also occurrences  of pattern \textbf{s}, while $\left\langle\textit{10}\right\rangle$ is not.
	\label{Example 7}
	\end{example}
 
	\begin{table}[!htb]
				\renewcommand{\arraystretch}{1.1}    
				\footnotesize %
				\captionsetup{font=footnotesize} %
				\caption{Element index of \textbf{k}} %
				\centering %
				\label{tab2} %
				\tabcolsep 8pt 
				\begin{tabular}{cccccccccccccccc} 
					\toprule 
					Index & 1 & 2 & 3 & 4 & 5 & 6 & 7 & 8 & 9 & 10 & 11 & 12 & 13 & 14 & 15  \\\hline
					\textbf{k} & 16 & 8 & 11 & 10 & 17 & 13 & 20 & 18 & 22 & 14 & 24 & 23 & 27 & 25 & 28 \\
					\bottomrule
				\end{tabular}
			\end{table}

	\textbf{Step 3.} Verify the occurrences $Occ_\textbf{o}$ of pattern \textbf{o} based on $Occ_\textbf{s}$. We know that \textbf{s} = ($s_1, s_2,\cdots, s_{m-1}$) is the suffix pattern of \textbf{o} = ($o_1, o_2,\cdots, o_m$). Suppose $\left\langle\textit{c}\right\rangle$ is an occurrence of \textbf{s}, which means that the OPP of sub-time series ($k_{c-m+2}, k_{c-m+3}, \cdots, k_c$) is \textit{R}(\textbf{s}). Now, we verify whether $\left\langle\textit{c}\right\rangle$ is also an occurrence of \textbf{o}, which means verifying whether the OPP of sub-time series ($k_{c-m+1}, k_{c-m+2}, k_{c-m+3},\cdots$ $, k_c$) is \textbf{o}  based on \textit{R}(\textbf{s}). To verify it easily, we propose the lower position and upper position at first.

 
	\begin{definition}  \label{Definition 11}
		\rm{\textit{\textbf{(Lower position and upper position)}}} We compare the values in ($s_1, s_2,…, s_{m-1}$) and ($q_2,…, q_{m-1}, q_m$) correspondingly. If $s_j=q_{j+1}$, then \textit{j}+1 is an unchanged position; otherwise, \textit{j}+1 is a changed position. Among the unchanged positions, if the value in $l_p$ is the maximum value, then $l_p$ is called the lower position. Among the changed positions, if the value in $u_p$ is the minimum value, then $u_p$ is called the upper position.
	\end{definition}

\begin{example}
	In Example ~\ref{Example 6}, we know \textbf{s} = (1,3,2) and \textbf{o} = (2,1,4,3). Now, we compare the values in (1,3,2) and (1,4,3) correspondingly. Obviously, 2 is an unchanged position, since $s_1$=$o_2$=1. Moreover, 2 is the lower position, since 2 is the only unchanged position. Similarly, $\{$3,4$\}$ are changed positions, since $s_2$=3 $\neq$ $o_3$=4 and $s_3$=2 $\neq$ $o_4$=3. Among the changed positions $\{$3,4$\}$, the value in 4 is the minimum value, since $o_4$=3 $<$ $o_3$=4. Thus, 4 is the upper position.
	\label{Example 8}
\end{example}

Then, based on lower and upper positions, we verify whether $k_{c-m+1}>k_{c-m+l_{p}}$ and $k_{c-m+1}<k_{c-m+u_p}$ are true. If it is true, then $\left\langle\textit{c}\right\rangle$ is also an occurrence of \textbf{o}; otherwise, $\left\langle\textit{c}\right\rangle$ is not. It is worth noting that $k_{c-m+l_{p}}$ or $k_{c-m+u_p}$ can be null. If it is null, the corresponding comparison is pruned.

\begin{example}
	In Example ~\ref{Example 8}, we know that 2 and 4 are the lower and upper positions, respectively.
Moreover, the length of pattern \textbf{s} = (1,3,2) is 3. Thus, \textit{m}=3+1=4.  For occurrence $\left\langle\textit{4}\right\rangle$ in Example ~\ref{Example 7}, \textit{c}-\textit{m}+1=1. Then, we verify whether $k_1>k_2$ and $k_1<k_4$ are true or not. $\left\langle\textit{4}\right\rangle$ is not an occurrence of pattern \textbf{o} in \textbf{k}, since $k_1>k_4$. Similarly, for occurrence $\left\langle\textit{6}\right\rangle$, we verify whether $k_3>k_4$ and $k_3<k_6$. We know that $\left\langle\textit{6}\right\rangle$ is an occurrence, since $k_3$=11, $k_4$=10, and $k_6$=13. Moreover, $\left\langle\textit{8}\right\rangle$, $\left\langle\textit{12}\right\rangle$, and $\left\langle\textit{14}\right\rangle$ are three occurrences of \textbf{o} in \textbf{k}. Now, we validate that $k_{c-m+l_{p}}$ or $k_{c-m+u_p}$ can be null. Suppose \textbf{o} = (4,1,3,2) and \textbf{s} = (1,3,2). Thus, $\{$2,3,4$\}$ are unchanged positions, while the changed position is null. Therefore, 3 is the lower position and the upper position is null, which means that we only need to verify whether $k_{c-m+1}>k_{c-m+l_{p}}$.
	\label{Example 9}
\end{example}

The pseudo-code of verification occurrences for pattern (VOP) \textbf{o}  is shown in Algorithm ~\ref{Algom 2}.

\begin{algorithm}[htb]
	\caption{VOP} %
	\label{Algom 2}
	\hspace*{0.02in} \leftline{{\bf Input:} OPP \textbf{o} and its suffix pattern \textbf{s}, occurrences of \textbf{s}: $Occ_\textbf{s}$, keypoint time series \textbf{k}}\\
		\hspace*{0.02in} \leftline{{\bf Output:} Occurrences of \textbf{o}: $Occ_\textbf{o}$}\\
		\begin{algorithmic}[1]		
		
		\STATE Calculate $l_p$ and $u_p$ according to Definition ~\ref{Definition 10};
		\STATE \textit{m}←$\left| \textbf{s} \right|$+1; // $\left| \textbf{s} \right|$ is the length of \textbf{s}
		\FOR { each \textit{c} in $Occ_\textbf{s}$}
		\IF {$k_{c-m+1}>k_{c-m+l_{p}}  \wedge k_{c-m+1}<k_{c-m+u_p}$}
		\STATE Append \textit{c} to $Occ_\textbf{o}$;
		\ENDIF
		\ENDFOR
		\RETURN $Occ_\textbf{o}$;
	\end{algorithmic}
\end{algorithm}

\textbf{Step 4.} Calculate the occurrences of all fusion patterns of pattern \textbf{o}. In this step, the main issue is that given pattern \textbf{s}, sub-time series ($k_{c-m+2}, k_{c-m+3},…, k_c$) according to occurrence $\left\langle\textit{c}\right\rangle$, and new value $k_{c+1}$, we calculate the OPP of \textbf{k}’ = ($k_{c-m+2}, k_{c-m+3},…, k_c, k_{c+1}$), i.e., \textit{R}(\textbf{k}’). An easy method is to adopt a linear search method by comparing $k_{c+1}$ with $k_{c-i}$ ($0\leq i\leq m-2$) to obtain \textbf{f} = ($f_1, f_2,…, f_{m-1}, f_m$) based on ($s_1, s_2,…, s_{m-1}$). Obviously, this method is inefficient. An efficient method is to employ the binary search method to calculate the relative order of $k_{c+1}$, i.e., \textit{v} = $o_\textbf{k’}(k_{c+1})$, which means that $k_{c+1}$ is the \textit{v}-th smallest in \textbf{k}’. Furthermore, each $\left\langle\textit{c}\right\rangle$ in $Occ_\textbf{s}$ is calculated according to the binary search method. All fusion patterns $F_{\textbf{o}, v}$ are stored into a set $F_{\textbf{o}}$, and corresponding occurrences $Occ_{F_{\textbf{o}, v}}$ are stored into $Occ_{F_{\textbf{o}}}$.

\begin{example}
	We use Table ~\ref{tab2} and Example ~\ref{Example 7} to illustrate this example. We know that \textbf{s} = (1,3,2) and $Occ_\textbf{s}$ = $\{$4,6,8,12,14$\}$. Taking $\left\langle\textit{4}\right\rangle$ as an example, we calculate the OPP of (8,11,10,17). We can employ the binary search method, since (1,3,2) is the relative order of (8,11,10). It is easy to know that the rank order of 17 in (8,11,10,17) is 4. Thus, $\left\langle\textit{5}\right\rangle$ as an occurrence of (1,3,2,4) is stored into $Occ_{F_{\textbf{o},4}}$. Similarly, we know that the rank orders of $k_7$, $k_9$, $k_{13}$, and $k_{15}$  are 4. Hence, $Occ_{F_{\textbf{o},4}}$ = $\{$5,7,9,13,15$\}$, and $F_{\textbf{o},4}$ is (1,3,2,4).
	\label{Example 10}
\end{example}


The pseudo-code of calculating fusion patterns (CFP) is shown in Algorithm ~\ref{Algom 3}.

\begin{algorithm}[htb]
	\caption{CFP} 
	\label{Algom 3}		
	\hspace*{0.02in} \leftline{{\bf Input:} Suffix pattern \textbf{s} of \textbf{o}, occurrences of \textbf{s}: $Occ_\textbf{s}$, keypoint time series \textbf{k}}\\
		\hspace*{0.02in} \leftline{{\bf Output:} Fusion patterns $F_\textbf{o}$, occurrences of $F_\textbf{o}$: $Occ_{F_\textbf{o}}$}\\
		\begin{algorithmic}[1]
		
		\FOR {each \textit{c} in $Occ_\textbf{s}$}
		\STATE $v$←Binary\_search(\textbf{s}, \textit{c}, $k_{c+1}$);
		\STATE $F_{\textbf{o},v}$++;
		\STATE Append $\left\langle \textit{c}+1 \right\rangle$ to $Occ_{F_{\textbf{o},v}}$;
		\ENDFOR
		\RETURN $F_\textbf{o}$ and $Occ_{F_\textbf{o}}$;
	\end{algorithmic}
\end{algorithm}

\subsection{Calculating supports of superpatterns} \label {subsect4.3}

Now, we know the occurrences of prefix OPP and suffix OPP. To calculate the supports of superpatterns, we propose the algorithm CSS based on pattern fusion whose principle is shown as follows.

Suppose $Occ_\textbf{o}$ = $\{go_1, \cdots, go_i, \cdots, go_y\}$ and $Occ_\textbf{f}$ = $\{gf_1, \cdots, gf_j, \cdots, gf_z\}$ are occurrences of OPPs \textbf{o} =  ($o_1, o_2, \cdots, o_{m}$) and \textbf{f} =  ($f_1, f_2, \cdots, f_{m}$), respectively. There are two cases for pattern \textbf{o} fused with pattern \textbf{f}.

\textbf{Case 1}: If $o_1$ = $f_m$, then two superpatterns with length \textit{m}+1, \textbf{u} = ($u_1,\cdots, u_i,\cdots, u_{m+1}$) and \textbf{v} = ($v_1,\cdots, v_i,\cdots, v_{m+1}$) are generated by \textbf{u}, \textbf{v} = \textbf{o} $\oplus$ \textbf{f}, where $u_1=v_{m+1}=o_1$ and $u_{m+1}=v_1=o_1+1$. For $1 \leqslant i \leqslant \textit{m}-1$, if $f_i<o_1$, then $u_{i+1}=v_{i+1}=f_i$; otherwise, if $f_i>o_1$, then $u_{i+1}=v_{i+1}=f_i+1$. Furthermore, the occurrences of \textbf{u} and \textbf{v} are calculated as follows: $\left\langle gf_j \right\rangle$ may be an occurrence of \textbf{u} or \textbf{v}, iff $gf_j$=$go_i$+1. Then, we need to determine the values of $k_{begin}$ and $k_{end}$, where \textit{begin}=$gf_j-m$ and \textit{end}=$gf_j$.

(1) If $k_{begin}=k_{end}$, then $\left\langle gf_j \right\rangle$ is not an occurrence of any superpattern.

(2) If $k_{begin}<k_{end}$, then $\left\langle gf_j \right\rangle$  is an occurrence of \textbf{u}, i.e., $\left\langle gf_j \right\rangle$ is added into $Occ_\textbf{u}$, and $\left\langle go_i \right\rangle$ and $\left\langle gf_j \right\rangle$ are pruned from $Occ_\textbf{o}$ and $Occ_\textbf{f}$, respectively.

(3) If $k_{begin}>k_{end}$, then $\left\langle gf_j \right\rangle$ is an occurrence of \textbf{v}, i.e., $\left\langle gf_j \right\rangle$ is added into $Occ_\textbf{v}$, and $\left\langle go_i \right\rangle$ and $\left\langle of_j \right\rangle$ are pruned from $Occ_\textbf{o}$ and $Occ_\textbf{f}$, respectively.

\textbf{Case 2}: {If $o_1 \neq f_m$, then a superpattern with length \textit{m}+1, \textbf{u} = ($u_1,…, u_i,…, u_{m+1}$) is generated by \textbf{u} = \textbf{o} $\oplus$ \textbf{f}.}

(1) If $o_1<f_m$, then $u_1$=$o_1$. For $1 \leqslant i \leqslant m$, if $f_i<o_1$, then $u_i$+1=$f_i$; otherwise,  $u_i$+1=$f_i$+1.   

(2) If $o_1>f_m$, then $u_1$=$o_1$+1. For $1 \leqslant i \leqslant m$, if $f_i \leqslant o_1$, then $u_i$+1=$f_i$; otherwise, $u_i$+1=$f_i$+1.  

Moreover, the occurrences of pattern \textbf{u} are calculated as follows: $\left\langle gf_j \right\rangle$ is an occurrence of \textbf{u}, iff $gf_j$=$go_i$+1, i.e., $\left\langle gf_j \right\rangle$ is added into $Occ_\textbf{u}$, and $\left\langle go_i \right\rangle$ and $\left\langle gf_j \right\rangle$ are pruned from $Occ_\textbf{o}$ and $Occ_\textbf{f}$, respectively. $sup(\textbf{u}, \textbf{k})$=$|Occ_\textbf{u}|$.

According to Cases 1 and 2, we know that if we find an occurrence of the superpattern, then an element will be pruned from  $Occ_\textbf{o}$. Therefore, the size of $Occ_\textbf{o}$ gradually decreases.  To further improve efficiency, we propose an ending strategy.

\textbf{Ending strategy.} If $|Occ_\textbf{o}| < minsup$, then CSS is terminated.

Example ~\ref{Example 11} illustrates the principle of CSS.

\begin{example}
	In this example, we use the results of Examples \ref{Example 9} and ~\ref{Example 10}. We know that \textbf{o} = (2,1,4,3) and $Occ_\textbf{o}$ = $\{$6,8,12,14$\}$, $F_\textbf{o}$ = $\{$(1,3,2,4)$\}$ and $Occ_{(1,3,2,4)}$ = $\{$5,7,9,13,15$\}$. Pattern \textbf{o} can be fused with each frequent fusion pattern in $F_\textbf{o}$ using CSS. If $minsup$=3, then a pattern that does not meet the threshold can be pruned. Suppose \textbf{f} = $F_{\textbf{o},4}$ = (1,3,2,4). Patterns \textbf{o} and \textbf{f} can be fused into one pattern, i.e., \textbf{o} $\oplus$ \textbf{f} = \textbf{u} = (2,1,4,3,5), since $o_1 \neq f_4$, which belongs to Case 2. We know the first element of $Occ_\textbf{o}$ is 6, and 7 is in $Occ_\textbf{f}$. Thus, $\left\langle\textit{7}\right\rangle$ is an occurrence of \textbf{u}, i.e., $\left\langle\textit{7}\right\rangle$ is added into $Occ_\textbf{u}$, and $\left\langle\textit{6}\right\rangle$ and $\left\langle\textit{7}\right\rangle$ are pruned from $Occ_\textbf{o}$ and $Occ_\textbf{f}$, respectively. Similarly, $\left\langle\textit{9}\right\rangle$, $\left\langle\textit{13}\right\rangle$, and $\left\langle\textit{15}\right\rangle$ are also three occurrences of \textbf{u}; $\left\langle\textit{9}\right\rangle$, $\left\langle\textit{13}\right\rangle$, and $\left\langle\textit{15}\right\rangle$ are added into $Occ_\textbf{u}$, i.e., $Occ_\textbf{u}$ = $\{$7,9,13,15$\}$ and $sup(\textbf{u},\textbf{k})$=4; $\left\langle\textit{8}\right\rangle$, $\left\langle\textit{12}\right\rangle$, and $\left\langle\textit{14}\right\rangle$ are pruned from $Occ_\textbf{o}$; $\left\langle\textit{9}\right\rangle$, $\left\langle\textit{13}\right\rangle$, and $\left\langle\textit{15}\right\rangle$ are pruned from $Occ_\textbf{f}$, i.e., $Occ_\textbf{o}$ = $\{$$\emptyset$$\}$ and $Occ_\textbf{f}$ = $\{$5$\}$. Hence,  \textbf{u} = (2,1,4,3,5) is a frequent COP with length 5, since $sup(\textbf{u},\textbf{k})$=4. Now, CSS is terminated, since the size of $Occ_\textbf{o}$ is zero.
	\label{Example 11}
\end{example}

Now, we prove that the ending strategy is correct. Before this, we first show that COP mining satisfies anti-monotonicity.

\begin{theorem}
COP mining satisfies anti-monotonicity, that is, the support of a superpattern is not greater than that of its subpattern.
	\label{Theorem mono}
\end{theorem}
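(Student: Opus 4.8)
The plan is to exhibit, for a superpattern and either of its subpatterns, an injection from the occurrence set of the superpattern into the occurrence set of the subpattern; since such an injection forces $|Occ_\textbf{x}| \leqslant |Occ_\textbf{e}|$, the support inequality follows immediately from Definition~\ref{Definition 5}. Concretely, let $\textbf{x} = (x_1, \ldots, x_m)$ be an OPP with prefix subpattern $\textbf{e} = \textit{prefixorder}(\textbf{x})$ and suffix subpattern $\textbf{s} = \textit{suffixorder}(\textbf{x})$ in the sense of Definition~\ref{Definition 7}. I would prove both $sup(\textbf{x}, \textbf{k}) \leqslant sup(\textbf{e}, \textbf{k})$ and $sup(\textbf{x}, \textbf{k}) \leqslant sup(\textbf{s}, \textbf{k})$, which together give the claimed anti-monotonicity for the subpattern relation.

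The only nontrivial ingredient is a projection fact: if a length-$m$ sub-time series $\textbf{k}' = (k_{i-m+1}, \ldots, k_i)$ satisfies $R(\textbf{k}') = \textbf{x}$, then the relative order of its prefix block $(k_{i-m+1}, \ldots, k_{i-1})$ equals $\textbf{e}$ and the relative order of its suffix block $(k_{i-m+2}, \ldots, k_i)$ equals $\textbf{s}$. First I would argue that deleting one endpoint value and re-ranking the remaining $m-1$ values preserves every pairwise comparison among them, i.e., for two retained values $a, b$ we have $a < b$ iff $o_{\textbf{k}'}(a) < o_{\textbf{k}'}(b)$; the relative-order operation of Definition~\ref{Definition 4} then merely renormalizes these preserved ranks onto $\{1, \ldots, m-1\}$. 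This renormalization is exactly the one used to define $\textit{prefixorder}$ and $\textit{suffixorder}$ in Definition~\ref{Definition 7}, so $R$ commutes with passing to a contiguous prefix or suffix block.

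Granting the projection fact, the injections are immediate. For the prefix direction, if $\left\langle i \right\rangle$ is an occurrence of $\textbf{x}$ then $R(k_{i-m+1}, \ldots, k_{i-1}) = \textbf{e}$, so $\left\langle i-1 \right\rangle$ is an occurrence of $\textbf{e}$; the boundary constraints $1 \leqslant i-m+1$ and $i \leqslant n$ ensure $\left\langle i-1 \right\rangle$ is well defined, and the map $\left\langle i \right\rangle \mapsto \left\langle i-1 \right\rangle$ is injective because distinct ending indices stay distinct after subtracting one. For the suffix direction the identity map on ending indices $\left\langle i \right\rangle \mapsto \left\langle i \right\rangle$ already lands in $Occ_\textbf{s}$ by the projection fact and is trivially injective. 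Either way $|Occ_\textbf{x}|$ is bounded above by the cardinality of the subpattern's occurrence set, which is precisely the support inequality.

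I expect the main obstacle to be the projection fact rather than the counting step: one must verify carefully that re-ranking a deleted subsequence neither creates nor destroys order relations, so that the block's relative order is literally $\textbf{e}$ (or $\textbf{s}$) and not merely some order-isomorphic pattern. Once this rank-preservation-under-projection property is pinned down, the injection and the cardinality comparison are routine, and the same reasoning extends by transitivity to any subpattern obtained through repeated prefix or suffix truncation, which is all that is needed to justify the ending strategy.
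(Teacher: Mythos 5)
Your proposal is correct and follows essentially the same route as the paper: both arguments rest on the injection $\left\langle i \right\rangle \mapsto \left\langle i-1 \right\rangle$ from occurrences of the superpattern into occurrences of its prefix subpattern, which immediately bounds the support. You are somewhat more thorough than the paper, which justifies this step only by appeal to Cases 1 and 2 of the CSS construction, whereas you prove the underlying projection fact directly from Definitions~\ref{Definition 4} and \ref{Definition 7} and also handle the suffix subpattern via the identity map on ending indices.
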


\begin{proof}
Suppose  $\left\langle gx_j \right\rangle$ is an occurrence of superpattern \textbf{x} = ($x_1, x_2,…, x_m$). Then we can safely say that $\left\langle gx_j -1 \right\rangle$ is an occurrence of subpattern ($x_1, x_2,…, x_{m-1}$) according to Cases 1 and 2. Therefore, the number of occurrences of a superpattern is not greater than that of its subpattern, i.e., the support of a superpattern is not greater than that of its subpattern. Hence, COP mining satisfies anti-monotonicity.

\end{proof}

\begin{theorem}
	The ending strategy is correct.
	\label{Theorem 1}
\end{theorem}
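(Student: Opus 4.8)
The plan is to show that once the test $|Occ_\textbf{o}| < minsup$ succeeds, every superpattern that CSS could still produce --- directly by fusing $\textbf{o}$ with an as-yet-unprocessed fusion pattern, or later by repeated fusion --- has support below $minsup$, so that halting CSS discards only infrequent patterns and therefore loses no frequent COP. The two ingredients I would combine are Theorem~\ref{Theorem mono} (anti-monotonicity) and the bookkeeping built into CSS, namely that locating an occurrence of a superpattern always prunes the matching occurrence out of $Occ_\textbf{o}$, so that the relevant quantity at the check is the already-shrunk set.

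The key step is a counting bound: for any superpattern $\textbf{u} = \textbf{o} \oplus \textbf{f}$ generated after the check, I would prove $sup(\textbf{u},\textbf{k}) \leqslant |Occ_\textbf{o}|$ evaluated at the moment of the check. By the argument used in the proof of Theorem~\ref{Theorem mono}, every occurrence $\left\langle gf_j \right\rangle$ of $\textbf{u}$ satisfies $gf_j = go_i+1$ for a corresponding occurrence $\left\langle go_i \right\rangle$ of the prefix $\textbf{o}$; distinct occurrences of $\textbf{u}$ give distinct indices $gf_j$ and hence distinct predecessors $go_i = gf_j-1$. Thus the occurrences of $\textbf{u}$ inject into the occurrence set of $\textbf{o}$, and it remains to verify that each such $go_i$ still lies in the current, pruned $Occ_\textbf{o}$. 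This is where I would use the fact that CFP (Algorithm~\ref{Algom 3}) assigns the occurrences of the various fusion patterns to pairwise disjoint positions: the index $go_i+1$ is an occurrence of the single fusion pattern $\textbf{f}$ and of no other, so $go_i$ cannot have been pruned while some earlier fusion pattern was processed. Hence $go_i$ survives in $Occ_\textbf{o}$, the injection lands inside the current set, and $sup(\textbf{u},\textbf{k}) \leqslant |Occ_\textbf{o}| < minsup$. In Case~1, where a single fusion produces both $\textbf{u}$ and $\textbf{v}$, each occurrence is routed to exactly one of the two patterns, so the same injection argument bounds $sup(\textbf{u},\textbf{k})$ and $sup(\textbf{v},\textbf{k})$ simultaneously.

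With this bound in hand the conclusion follows: every length-$(m+1)$ superpattern still reachable from $\textbf{o}$ is infrequent, and by Theorem~\ref{Theorem mono} any longer pattern extending it is also infrequent, so none of them --- and none of their descendants in the mining tree --- can be a frequent COP. Therefore halting CSS at the point where $|Occ_\textbf{o}| < minsup$ cannot omit a frequent COP, which is exactly the correctness claim of the ending strategy.

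I expect the main obstacle to be precisely the ``still present'' part of the counting bound, i.e. justifying that the quantity to compare against $minsup$ is the \emph{current}, already-shrunk $|Occ_\textbf{o}|$ rather than the original support of $\textbf{o}$. A naive appeal to anti-monotonicity only yields $sup(\textbf{u},\textbf{k}) \leqslant sup(\textbf{o},\textbf{k})$ measured on the full occurrence set, which would not license testing against the pruned size. Making the stronger statement rigorous requires the disjointness of fusion-pattern occurrence positions from CFP, together with the observation that pruning an occurrence of $\textbf{o}$ during one fusion never removes an occurrence that a different, not-yet-processed fusion pattern would have consumed.
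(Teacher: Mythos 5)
Your proposal is correct and follows essentially the same route as the paper's own proof: both argue that occurrences of any still-unprocessed superpattern inject into the current, already-pruned $Occ_\textbf{o}$, and then invoke Theorem~\ref{Theorem mono} to rule out longer descendants. In fact your write-up is more rigorous than the paper's, which jumps directly from anti-monotonicity to the bound against the pruned set; the disjointness of the fusion-pattern occurrence sets produced by CFP, which you supply explicitly, is precisely the missing justification for that step.
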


\begin{proof}
	\rm{When pattern \textbf{o} fuses with pattern \textbf{f}, if an occurrence of superpattern \textbf{u} or \textbf{v} is found, then the corresponding occurrence of $Occ_\textbf{o}$ is removed, which means that the size of $Occ_\textbf{o}$ gradually decreases. We know that COP mining satisfies anti-monotonicity. Therefore, when $|Occ_\textbf{o}| < minsup$, if pattern \textbf{o} fuses with a new pattern \textbf{w}, then the support of superpattern is also less than \textit{minsup}. Hence, CSS can be terminated, which means that we do not need to fuse pattern \textbf{o} with the other patterns in $F_\textbf{o}$.}
\end{proof}

The pseudo-code of CSS is shown in Algorithm ~\ref{Algom 4}.

\begin{algorithm}[h]
	\caption{CSS} 
	\label{Algom 4}		
	\hspace*{0.02in} \leftline{{\bf Input:} Keypoint time series \textbf{k}, OPP \textbf{o} of keypoint sub-time series \textbf{q}, occurrences of \textbf{o}: $Occ_\textbf{o}$,}\\ 
 \hspace*{0.43in} \leftline{fusion patterns $F_\textbf{o}$, occurrences of $F_\textbf{o}$: $Occ_{F_\textbf{o}}$, and \textit{minsup}}\\
		\hspace*{0.02in} \leftline{{\bf Output:} Frequent COP set $C_{m+1}$ and their occurrences $Occ_{C_{m+1}}$}\\
		\begin{algorithmic}[1]
		\FOR {each pattern \textbf{f} in $F_\textbf{o}$}
		\IF {$|Occ_\textbf{f}| \geqslant minsup$}
		\IF {\textbf{o}[1] == \textbf{f}[\textit{m}]}
		\STATE \textbf{u}, \textbf{v} = \textbf{o} $\oplus$ \textbf{f};
		\STATE Obtain occurrences $Occ_\textbf{u}$ and $Occ_\textbf{v}$ according to Case 1, and update $Occ_\textbf{o}$ and $Occ_\textbf{f}$;
		\STATE If \textbf{u} and \textbf{v} are frequent, then store them into $C_{m+1}$;
		\ELSE {}
		\STATE \textbf{u} = \textbf{o} $\oplus$ \textbf{f};
		\STATE Obtain occurrences $Occ_\textbf{u}$ according to Case 2, and update $Occ_\textbf{o}$ and $Occ_\textbf{f}$;
		\STATE If \textbf{u} is frequent, then store it into $C_{m+1}$ ;
		\ENDIF
		\ENDIF
		\IF {$|Occ_\textbf{o}|<minsup$}
		\STATE break; // Ending strategy
		\ENDIF
		\ENDFOR
		\RETURN $C_{m+1}$ and $Occ_{C_{m+1}}$;
	\end{algorithmic}
\end{algorithm}

\subsection{COP-Miner} \label {subsect4.4}

To mine all COPs of the given pattern \textbf{o} = \textit{R}(\textbf{q}), we propose COP-Miner, whose main steps are as follows.

Step 1: Extract keypoints of time series \textbf{t} and sub-time series \textbf{p} using the LEP algorithm, and then obtain keypoint time series \textbf{k} and keypoint sub-time series \textbf{q}.

Step 2: Adopt the method of preparation stage to calculate the occurrences of pattern \textbf{o} and further generate all fusion patterns $F_\textbf{o}$, thereby calculating the occurrences of all fusion patterns of pattern \textbf{o}.

Step 3: Use the CSS algorithm to calculate the supports of superpatterns, and then mine frequent COPs $C_{m+1}$ with length \textit{m}+1.

Step 4: For each frequent pattern \textbf{h} in $C_k$, adopt the CFP algorithm to calculate the occurrences of all fusion patterns of pattern \textbf{h}.

Step 5: Iterate Steps 3 and 4, until $C_{k+1}$ is empty.

Example \ref {example12} is used to illustrate the principle of COP-Miner.

\begin{example} \label {example12}
	Suppose we have the same time series \textbf{t} as in Example ~\ref{Example 2} and pattern \textbf{p} = (5,3,7,13,8), and we set \textit{minsup}=3. 
		
		According to Step 1, we obtain keypoint time series \textbf{k} shown in Table ~\ref{tab2} and keypoint sub-time series \textbf{q} = (5,3,13,8) using the LEP algorithm. 
		
		According to Step 2, the OPP of \textbf{q} is \textbf{o} = \textit{R}(\textbf{q}) = (2,1,4,3), and we adopt the method of preparation stage to calculate the occurrences $Occ_\textbf{o}$ = $\{$6,8,12,14$\}$. There is one fusion pattern $F_{\textbf{o},4}$ = (1,3,2,4), whose occurrences is $Occ_{F_{\textbf{o},4}}$ = $\{$5,7,9,13,15$\}$.
		
		According to Step 3, pattern \textbf{o} and fusion pattern $F_{\textbf{o},4}$ can be fused into a superpattern \textbf{u} = (2,1,4,3,5) and $Occ_\textbf{u}$ = $\{$7,9,13,15$\}$ using the CSS algorithm. 
		
		According to Step 4, \textbf{u} is the only frequent COP, and we store \textbf{u} into $C_5$. Using the CFP algorithm, we discover two fusion patterns $F_{\textbf{u},2}$ = (1,4,3,5,2) and $F_{\textbf{u},4}$ = (1,3,2,5,4), whose occurrences are $Occ_{F_{\textbf{u},2}}$ = $\{$10$\}$ and $Occ_{F_{\textbf{u},4}}$ = $\{$8,14$\}$. $C_6$ is empty, since these two fusion patterns of pattern \textbf{u} are infrequent. 
		
		Therefore, all frequent COPs of pattern \textbf{q} are in \textit{C} = $\{$(2,1,4,3,5)$\}$.
	\label{Example 12}
\end{example}

The pseudo-code of COP-Miner is shown in Algorithm ~\ref{Algom 6}.

\begin{algorithm}[htbp]
	\caption{COP-Miner} 
	\label{Algom 6}		  	
	\hspace*{0.02in} \leftline{{\bf Input:} Time series \textbf{t}, pattern \textbf{p}, \textit{minsup}}\\
		\hspace*{0.02in} \leftline{{\bf Output:} Frequent COPs \textit{C}}\\
		\begin{algorithmic}[1]
		
		\STATE \textbf{k}←LEP(\textbf{t}), \textbf{q}←LEP(\textbf{p});
		//Preparation stage
		\STATE \textbf{o} = \textit{R}(\textbf{q}), calculate \textbf{s} = \textit{suffixorder}(\textbf{o}) according to Definition ~\ref{Definition 7};
		\STATE Use FAV to calculate the occurrences $Occ_\textbf{s}$;
		\STATE $Occ_\textbf{o}$←VOP(\textbf{o}, \textbf{s}, $Occ_\textbf{s}$, \textbf{k});
		\STATE $F_\textbf{o}$, $Occ_{F_\textbf{o}}$←CFP(\textbf{o}, $Occ_\textbf{s}$, \textbf{k});
		
		
		//Iteratively calculating supports and mining frequent COPs
		\STATE $C_{m+1}, Occ_{C_{m+1}}$←CSS($\textbf{k}, \textbf{o}, Occ_\textbf{o}, F_\textbf{o}, Occ_{F_\textbf{o}}, minsup$);
		\STATE \textit{g}←\textit{m}+1, \textit{C}←$C_{m+1}$;
		\WHILE {$C_\textit{g} \neq $ null}
		\FOR {each \textbf{h} in $C_\textit{g}$}
		\STATE $F_\textbf{h}, Occ_{F_\textbf{h}}$←CFP(\textbf{h}, $Occ_{{suffixorder}(\textbf{h})}$, \textbf{k});
		\STATE $C_{\textit{g}+1}, Occ_{C_{\textit{g}+1}}$←CSS($\textbf{k}, \textbf{h}, Occ_\textbf{h}, F_\textbf{h}, Occ_{F_\textbf{h}}, minsup$);\STATE \textit{C}←$C \cup\ C_{\textit{g}+1}$;
		\ENDFOR
		\STATE \textit{g}←\textit{g}+1;
		\ENDWHILE
		\RETURN \textit{C};
	\end{algorithmic}
\end{algorithm}

\begin{theorem}
\textcolor{black}{The time complexity of COP-Miner is $O(N+M+n \times log_2 ^m+d\times (n+m))$, where $N$, $M$, $n$, $m$, and $d$ are the lengths of the original time series \textit{t}, the given prefix pattern \textit{p}, the keypoint time series \textit{k}, the keypoint pattern \textit{q}, and the number of iterations of Line 8 in COP-Miner which means that the maximum length of the mined pattern is d+m, respectively.}
\label{Theorem 2}
\end{theorem}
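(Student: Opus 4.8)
The plan is to decompose COP-Miner (Algorithm~\ref{Algom 6}) into its three structural components---keypoint extraction, the preparation stage, and the iterative mining loop---bound each separately, and then sum the bounds. First I would handle Line~1: the LEP procedure (Algorithm~\ref{Algom 1}) makes a single linear scan over $\textbf{t}$ and another over $\textbf{p}$, comparing each point with its two neighbours in constant time, so it costs $O(N+M)$. These two scans account for the $N+M$ terms in the claimed bound.

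Next I would analyse the preparation stage (Lines~2--5). Computing $\textbf{s}=\textit{suffixorder}(\textbf{o})$ is $O(m)$. The FAV call in Line~3 converts $\textbf{k}$ and $\textbf{s}$ to binary sequences in $O(n)$ and $O(m)$, runs the linear SBNDM$_2$ matching in $O(n)$, and verifies each of the at most $n$ candidates in constant time, so FAV is $O(n)$. The VOP call (Line~4, Algorithm~\ref{Algom 2}) precomputes the lower and upper positions once and then performs an $O(1)$ test for each of the at most $n$ elements of $Occ_\textbf{s}$, giving $O(n)$. The CFP call (Line~5, Algorithm~\ref{Algom 3}) sweeps those at most $n$ occurrences and runs a binary search of cost $O(\log_2 m)$ for each to locate the rank of $k_{c+1}$ in the current window; hence CFP is $O(n\times\log_2 m)$, which supplies the $n\times\log_2 m$ term.

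Then I would bound the mining loop (Lines~8--15), which executes $d$ times because each pass lengthens the mined pattern by one and the longest COP has length $d+m$. For the $O(n)$ part of one iteration I would invoke a disjointness observation: at any fixed length every position of $\textbf{k}$ is the right endpoint of exactly one relative order, so the occurrence lists $Occ_\textbf{h}$ of distinct COPs $\textbf{h}$ at the same level are pairwise disjoint and their sizes sum to at most $n$; thus the combined occurrence-sweeping work of all CFP (Line~10) and CSS (Line~11) calls in one iteration is $O(n)$, independent of how many patterns the level contains, and Theorem~\ref{Theorem mono} together with the ending strategy only reinforces this by guaranteeing each occurrence is consumed once under fusion. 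For the $O(m)$ part, building each fused superpattern's relative order through the $\oplus$ rules of Cases~1 and~2 touches $O(m)$ positions. Combining yields $O(n+m)$ per iteration, so the loop contributes $d\times(n+m)$, and summing all four parts gives $O(N+M+n\times\log_2 m+d\times(n+m))$.

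The hardest step is justifying the $O(n+m)$ cost of the loop body, since a naive count---several frequent COPs per level, each carrying its own occurrence list and spawning several fusion patterns---would suggest a dependence on the number of patterns. The resolution rests on the disjointness of same-length occurrence lists, which partition the $n$ endpoints of $\textbf{k}$ and collapse the occurrence work to $O(n)$, and on anti-monotonicity (Theorem~\ref{Theorem mono}) with the consume-on-fusion semantics of CSS, which ensures no occurrence is processed twice. Making this amortization precise, and confirming that the superpattern-construction cost remains $O(m)$ rather than scaling with the pattern count, is where the argument needs the most care.
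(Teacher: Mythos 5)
Your proposal is correct and follows essentially the same route as the paper: a line-by-line decomposition of COP-Miner into keypoint extraction ($O(N+M)$), the preparation stage (dominated by the $O(n\times \log_2 m)$ cost of FAV and CFP), and the iterative loop ($O(d\times(n+m))$), summed to give the stated bound. The only differences are cosmetic --- you attribute the $n\log_2 m$ term to CFP alone and treat SBNDM$_2$ as linear where the paper charges it $O(n\times \log_2 m)$ as well (either way the total is unchanged), and your disjointness/amortization argument for why each loop iteration costs $O(n+m)$ regardless of the number of patterns per level is actually a justification the paper asserts without proof.
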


\begin{proof}
\textcolor{black}{ Line 1 of COP-Miner extracts the keypoints of the original time series and the given prefix pattern. The time complexity of Line 1 is $O(N+M)$. Line 2 calculates \textit{suffixorder}(\textbf{q}) whose time complexity is $O(m)$. The time complexity of generating binary sequences in FAV is $O(n+m)$. The time complexity of SBNDM$_2$ matching in FAV is $O(n \times log_2^m)$. The time complexity of verification in FAV is $O(n)$. Thus, the time complexity of Line 3 is $O(n\times log_2^m+m)$. The maximal length of Occs is less than $n$. Thus, the time complexity of Line 4 (the VOP algorithm) is $O(n)$. Moreover, the time complexity of Line 2 in the CFP algorithm is $O(log_2^m)$. Therefore, the time complexity of Line 5 (the CFP algorithm) is $O(n \times log_2^m)$. The CSS algorithm has two parts: generating superpatterns and calculating their occurrences. Thus, the time complexity of Line 6 (the CSS algorithm) is $O(n+m)$. The time complexity of Lines 8 to 15 is $O(d\times (n+m))$, since the number of iterations of Line 8 is $d$. Hence, the time complexity of COP-Miner is $O(N+M+n \times log_2 ^m+d\times (n+m))$.}

\end{proof}

\begin{theorem}
\textcolor{black}{The space complexity of COP-Miner is  $O(N+M+ d \times (n+m))$.}
\label{Theorem 3}
\end{theorem}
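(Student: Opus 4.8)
The plan is to mirror the line-by-line accounting already used for the time complexity in Theorem~\ref{Theorem 2}, but now tracking the peak size of every data structure that COP-Miner keeps in memory rather than the number of operations. First I would charge the input storage: holding the original time series $\textbf{t}$ costs $O(N)$ and the given prefix pattern $\textbf{p}$ costs $O(M)$, and after Line~1 the keypoint versions $\textbf{k}$ and $\textbf{q}$ add only $O(n)$ and $O(m)$, which are dominated by the later $d\times(n+m)$ term.

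Next I would account for the preparation stage (Lines~2--5). The binary sequence $\textbf{b}$ and binary pattern $\textbf{d}$ produced inside FAV occupy $O(n)$ and $O(m)$, respectively. The occurrence lists $Occ_\textbf{s}$ and $Occ_\textbf{o}$ each contain at most one entry per admissible starting position of $\textbf{k}$, so each has size $O(n)$. The crucial observation for the fusion patterns $F_\textbf{o}$ and their occurrences $Occ_{F_\textbf{o}}$ is that the occurrences of the distinct fusion patterns are obtained by partitioning $Occ_\textbf{s}$ according to the rank $v$ of the newly appended value, so their combined occurrences still total $O(n)$, while the fusion patterns themselves contribute only $O(m)$ per level. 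Hence the preparation stage uses $O(n+m)$ working space.

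Then I would bound the iterative phase (Lines~8--15). At each iteration the CFP and CSS calls store, for the current pattern length, a collection of fusion patterns and frequent COPs together with their occurrences; by the same partition argument the occurrence lists summed over one level are $O(n)$ and the patterns contribute $O(m)$, i.e.\ $O(n+m)$ per level. Since Line~8 iterates $d$ times (and the pruning in Cases~1 and~2 together with the ending strategy only shrinks the retained lists), the accumulated frequent-COP set $C$ and all retained occurrence lists total $O(d\times(n+m))$. Summing the input storage, the preparation stage, and the iterative phase then gives $O(N+M)+O(n+m)+O(d\times(n+m))=O(N+M+d\times(n+m))$, as claimed.

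The main obstacle is precisely the fusion/COP accounting: one must argue that the occurrence lists belonging to the several patterns produced at a single level do not super-add, because they all arise from partitioning or pruning a single parent list of size $O(n)$, so the per-level cost is $O(n+m)$ rather than (number of generated patterns)$\times n$. Establishing this partition property is what keeps the bound linear in the iteration count $d$ instead of letting it grow with the branching factor of the search; the remaining input and preparation terms are routine.
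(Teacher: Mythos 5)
Your proposal follows essentially the same line-by-line space accounting as the paper's proof: $O(N+M)$ for the inputs, $O(n+m)$ for the preparation stage (Lines 1--5), and $O(d\times(n+m))$ for the iterative phase (Lines 8--15). The only difference is that you make explicit the partition argument (occurrence lists of the fusion patterns at one level partition a single parent list of size $O(n)$), which the paper's proof implicitly assumes when it asserts $O(n+m)$ per level; this is a correct and welcome addition rather than a departure.
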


\begin{proof}
\textcolor{black}{The space complexity of Line 1 of COP-Miner is $O(N+M+n+m)$, since LEP converts \textbf {t} and \textbf {p} into \textbf {k} and \textbf {q}, respectively. The space complexity of Line 2 is $O(m)$. The space complexity of Line 3 is $O(n)$. The space complexity of Line 4 (the VOP algorithm) is $O(n)$. The space complexity of Line 5 (the CFP algorithm) is $O(n+m)$. Moreover, the space complexity of Lines 8 to 15 is $O(d \times (n+m))$. Hence, the space complexity of COP-Miner is $O(N+M+ d \times (n+m))$.}

\end{proof}

\section{Experimental results and analysis} \label {section5}

In this section, we consider the performance of COP-Miner from multiple perspectives. Moreover, we answer the following five research questions (RQs) through experiments:

RQ1: Do we need to specially design an algorithm?

RQ2: What is the effect of each strategy in COP-Miner?

RQ3: Does COP-Miner have better scalability than other competing algorithms?

RQ4: How do different parameters affect the performance of COP-Miner?

RQ5: What is the impact of keypoint extraction on time series prediction?

To answer RQ1, we compared the efficiency of COP-Miner with OPP-Miner and EFO-Miner, which mine all OPPs and then filter COPs. To address RQ2, we proposed COP-kom, COP-noVOP, COP-noCFP, COP-efo, COP-enum, COP-noending, and FAV-sliding to verify the efficiency of COP-Miner (see Section \ref {subsect5.3}). To answer RQ3, we conducted experiments on datasets of different lengths to report the performance of scalability (see Section \ref {subsect5.4}). To address RQ4, we explored the effects of prefix patterns with different lengths and minimum support thresholds on the running time of the algorithm (see Section \ref {subsect5.5}). To solve RQ5, we selected two models: autoregressive integrated moving average (ARIMA)  \cite{ARIMA, LiARIMA2022}, dynamic time warping (DTW) \cite{KeoghDTW2005}, and proposed COP-noLEP to verify the prediction performance(see Section \ref {subsect5.6}).

\subsection{Datasets} \label {subsect5.1}

We used nine real datasets for experimental evaluation. The electrocardiogram (ECG) datasets include KURIAS\_HeartRate, KURIAS\_PAxis, and KURIAS\_PRInterval. They are part of a 12-lead ECG database with standardized diagnostic ontology, which can be downloaded at https://physionet.org /content/kurias-ecg/1.0/KURIAS-ECG. The stock datasets are the opening price of the S\&P 500 and listed stocks on the New York Stock Exchange, which can be downloaded at https://www.yahoo.com/. 
To validate the performance of COP-Miner on large-scale dataset, we generated a synthetic real dataset named Big\_NYSE by enlarging SDB8 440 times. The PM2.5 and temperature datasets are the records of Chengdu and Guangzhou, respectively, which can be downloaded at https://archive.ics.uci. edu/ml/datasets.php. The CSSE COVID-19 dataset is the information on the number of confirmed cases in Alabama, USA, which can be downloaded at https://datahub.io/core/covid-19. The specific description of each dataset is shown in Table ~\ref{tab3}.

\begin{table}[htbp]
	\renewcommand{\arraystretch}{1.1}    
		\footnotesize %
		\captionsetup{font=footnotesize} %
		\caption{Description of datasets} %
		\centering %
		\label{tab3} %
		\tabcolsep 12pt 
			\begin{tabular}{lccc} %
				\toprule %
				Dataset & Name & Type & Total length \\
				\midrule
				SDB1 & KURIAS\_HeartRate & ECG & 13,863 \\
				SDB2 & KURIAS\_PAxis & ECG & 13,863 \\
				SDB3 & KURIAS\_PRInterval & ECG & 13,863 \\
				SDB4 & S\&P 500 & Stock & 23,046 \\
				SDB5 & ChengduPM2.5 & PM2.5 & 28,900 \\
				SDB6 & GuangZhouTemp & Temperature & 52,582 \\
				SDB7 & US\_Alabama\_Confirmed & CSSE COVID-19 & 56,304 \\
				SDB8 & NYSE & Stock & 60,000 \\
				{SDB9} & {Big\_NYSE} & {Stock} & {26,400,000} \\
				\bottomrule
		\end{tabular}
		\label{tab3} %
	\end{table}
	
	We ran the experiments presented in this paper on a computer with Intel(R) Core(TM) i5-5200U, 2.20GHZ CPU, 4.0GB RAM, and the Windows 10 64-bit operating system. The programming environment was IntelliJ IDEA 2019.2. \textcolor {black}{All the algorithms and datasets can be downloaded from     https://github.com/wuc567/Pattern-Mining/tree/master/COP-Miner.}
	
	\subsection{Baseline methods}
	To evaluate the efficiency and prediction performance of the COP-Miner algorithm, we designed four experiments and selected ten competing algorithms for comparison.
	
	1. OPP-Miner \cite{WuOPP-Miner2022} and EFO-Miner \cite{WuOPR-Miner2022}: To validate that it is necessary to design a special algorithm, we applied OPP-Miner and EFO-Miner, which adopt a pattern matching strategy and subpattern’s matching results to calculate the support of superpattern, respectively. They both generate COPs based on find-all-frequent OPPs.
	
	2. COP-kom: To validate the efficiency of the FAV \cite{WuOPP-Miner2022} algorithm, we proposed COP-kom, which applies the KMP-Order-Matcher \cite{Kimmatching2014} algorithm to obtain the occurrences of \textit{{suffixorder}}(\textbf{q}).
	
	3. COP-noVOP: To validate the efficiency of the VOP algorithm, we developed COP-noVOP, which does not use the VOP algorithm, but rather applies a linear search to obtain the occurrences of pattern \textbf{q}.
	
	4. COP-noCFP: To validate the efficiency of CFP algorithm, we developed COP-noCFP, which does not use a binary search, but rather applies a linear search to calculate the occurrences of all fusion patterns.
	
	5. COP-efo: To evaluate the superiority of the method of the preparation stage, we proposed COP-efo, which employs EFO-Miner \cite{WuOPR-Miner2022} to calculate the occurrences of prefix pattern \textbf{q} and its fusion patterns.
	
	6. COP-enum: To analyze the performance of pattern fusion strategy in CSS, we proposed COP-enum algorithm, which generates candidate patterns using an enumeration method.
	
	7. COP-noending: To verify the effectiveness of the ending strategy, we developed COP-noending, which adopts the CSS algorithm to calculate the support, but does not use the ending strategy.
	
	8. FAV-sliding: To verify the efficiency of the CSS algorithm, we designed FAV-sliding, which employs FAV \cite{WuOPP-Miner2022} to calculate the occurrences $Occ_\textbf{s}$ of pattern \textbf{s}, and then uses the sliding windows to calculate the occurrences of superpatterns.
	
	9. COP-noLEP: To verify the influence of extracting keypoints on the prediction performance, we developed COP-noLEP to generate COPs, which does not use the LEP algorithm, but rather mines COPs directly from the original time series.

	\subsection{Performance of COP-Miner} \label {subsect5.3}
	
	To determine the necessity of specially designing COP-Miner, we selected OPP-Miner and EFO-Miner as the competing algorithms. To evaluate the performance of the FAV, VOP, and CFP algorithms, we employed COP-kom, COP-noVOP, and COP-noCFP as competing algorithms. To evaluate the performance of the method of the preparation stage, we designed COP-efo as a competing algorithm. To evaluate the performance of the pattern fusion strategy and ending strategy, we employed  COP-enum and COP-noending as competing algorithms. Moreover, to examine the performance of the CSS algorithm, we designed FAV-sliding as a competing algorithm. We set \textit{minsup}=10 on SDB1-SDB8 and set \textit{minsup}=6000 on SDB9. Since all ten algorithms are complete, they mine the same number of frequent patterns. When the length of a given pattern is six, i.e., $\textbf{p}_{\rm{SDB1}}$ = $\textbf{p}_{\rm{\rm{SDB3}}}$ = (1,5,2,6,3,4), $\textbf{p}_{\rm{SDB2}}$ = (1,5,3,4,2,6), $\textbf{p}_{\rm{SDB4}}$ = (1,4,3,6,5,2), $\textbf{p}_{\rm{SDB5}}$ = $\textbf{p}_{\rm{SDB7}}$ = $\textbf{p}_{\rm{SDB8}}$ = {$\textbf{p}_{\rm{SDB9}}$ =} (1,3,2,5,4,6), and $\textbf{p}_{\rm{SDB6}}$ = (3,6,4,5,1,2), there are 5, 5, 6, 24, 21, 2, 2, {82, and 52 frequent COPs mined from SDB1-SDB9,} respectively. Figs. ~\ref{RunningTime} and \ref{MemoryUsage} show the comparisons of running time and memory usage, respectively. We also show the comparisons of the numbers of candidate patterns and the number of occurrences of superpatterns in Figs. ~\ref{CandidatesNumber} and \ref{ElementsComparison}, respectively. It is worth noting that the information regarding the OPP-Miner and FAV-sliding algorithms are not presented in Fig. \ref{ElementsComparison}, since they do not use the occurrences of subpatterns to calculate the support of superpattern.
 
\begin{figure}[htbp]
	
	\centering
	\includegraphics[width=0.65\linewidth]{"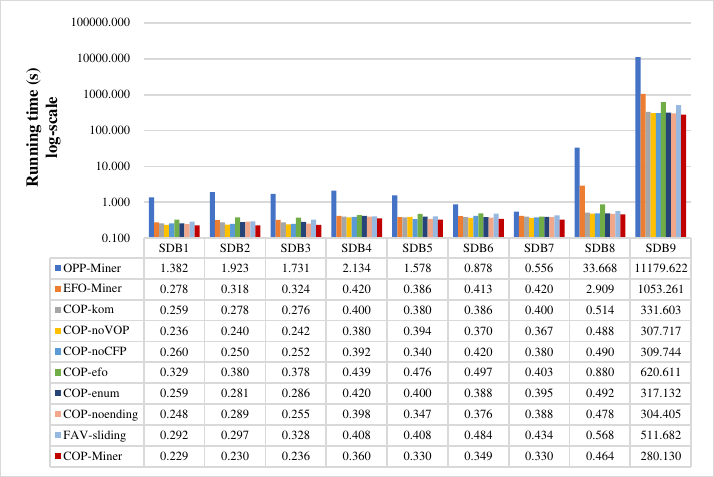"}
	\caption{\textcolor {black}{Comparison of running time on SDB1-SDB9}}
	\label{RunningTime}
\end{figure}

\begin{figure}[htbp]
	\centering
	\includegraphics[width=0.65\linewidth]{"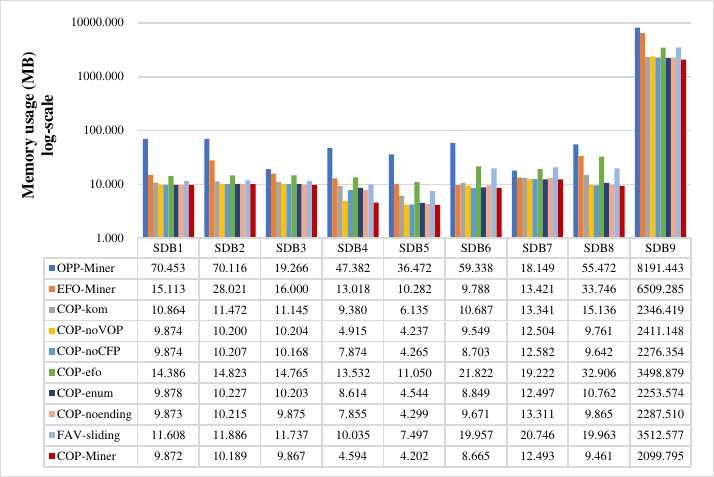"}
	\caption{\textcolor {black}{Comparison of memory usage on SDB1-SDB9}}
	\label{MemoryUsage}	
\end{figure}

\begin{figure}[htbp]
	
	\centering
	\includegraphics[width=0.65\linewidth]{"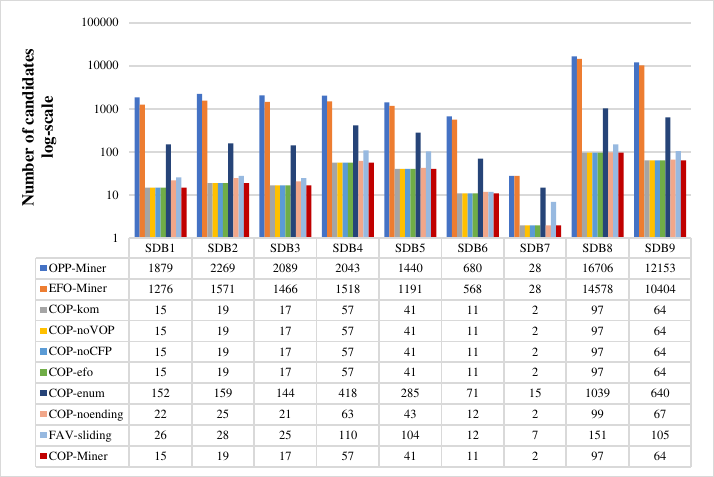"}
	\caption{\textcolor {black}{Comparison of numbers of candidate patterns on SDB1-SDB9}}
	\label{CandidatesNumber}
\end{figure}

\begin{figure}[htbp]
	
	\centering
	\includegraphics[width=0.65\linewidth]{"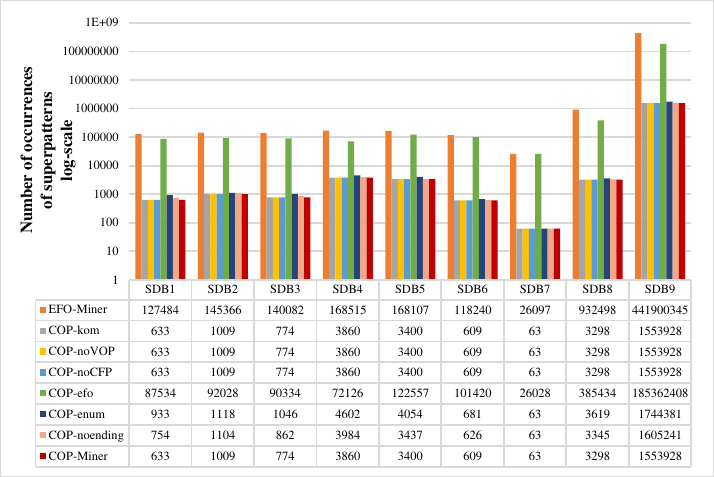"}
	\caption{\textcolor {black}{Comparison of numbers of occurrences of superpatterns on SDB1-SDB9}}
	\label{ElementsComparison}
\end{figure}

The results give rise to the following observations.

1. COP-Miner runs faster and costs less memory than both OPP-Miner and EFO-Miner. For example, on {SDB9}, it can be seen from Fig. ~\ref{RunningTime} that OPP-Miner and EFO-Miner require {11,179.622s} and {1,053.261s} of running time, respectively, while COP-Miner needs {280.130s}. Thus, COP-Miner is about 40 times and 4 times faster than OPP-Miner and EFO-Miner, respectively. In Fig. ~\ref{MemoryUsage}, OPP-Miner and EFO-Miner occupy {8,191.4Mb} and {6,509.3Mb} of running memory, respectively, while COP-Miner occupies only {2,099.8Mb}. The reason is as follows. Both OPP-Miner and EFO-Miner discover COPs based on finding all frequent OPPs, which is a brute-force method to discover COPs. Thus, the two algorithms have to check more candidate patterns and generate more occurrences of superpatterns. Figs. ~\ref{CandidatesNumber} and ~\ref{ElementsComparison} also verify these claims. From Fig. ~\ref{CandidatesNumber}, OPP-Miner and EFO-Miner generate {12,153} and {10,404} candidate patterns, respectively, while COP-Miner generates only {64}. From Fig. ~\ref{ElementsComparison}, EFO-Miner makes {441,900,345} comparisons while COP-Miner makes only {1,553,928}. We can find the same effect on other datasets. Therefore, COP-Miner outperforms OPP-Miner and EFO-Miner, and it is necessary to specially design an algorithm for COP mining.


2. COP-Miner runs faster and requires less memory than COP-kom, COP-noVOP, and COP-noCFP. For example, on SDB4, it can be seen from Fig. ~\ref{RunningTime} that COP-kom, COP-noVOP, and COP-noCFP require 0.400s, 0.380s, and 0.392s, respectively, while COP-Miner only needs 0.360s. From Fig. ~\ref{MemoryUsage}, COP-kom, COP-noVOP, and COP-noCFP occupy 9.380Mb, 4.915Mb, and 7.874Mb, respectively, while COP-Miner occupies only 4.594Mb. The reasons are as follows. The four algorithms employ the same CSS algorithm to calculate the supports of superpatterns. Thus, from Figs. ~\ref{CandidatesNumber} and ~\ref{ElementsComparison}, we know that the four algorithms generate the same number of candidate patterns and occurrences of superpatterns. For example, on SDB4, the four algorithms generate 57 candidate patterns and make 3,860 comparisons. The difference between COP-kom and COP-Miner is that COP-kom uses the KMP-Order-Matcher \cite{Kimmatching2014} algorithm to find the occurrences of \textit{{suffixorder}}(\textbf{q}), while COP-Miner adopts the FAV algorithm. The results indicate that FAV is more effective than KMP-Order-Matcher. Similarly, the results validate that the designed VOP algorithm is more efficient than the linear search to obtain the occurrences of pattern \textbf{q}, and the binary search in CFP is more effective than the linear search to calculate the occurrences of fusion patterns. Therefore, COP-Miner outperforms COP-kom, COP-noVOP, and COP-noCFP.	

3. COP-Miner runs faster and consumes less memory than COP-efo. For example, on SDB6, from Fig. ~\ref{RunningTime}, COP-efo requires 0.497s, while COP-Miner only needs 0.349s. In Fig. ~\ref{MemoryUsage}, COP-efo occupies 21.822Mb, while COP-Miner occupies only 8.665Mb. The reason is as follows. COP-efo employs EFO-Miner to calculate the occurrences of prefix pattern \textbf{q} and its fusion patterns, which requires pattern fusion and multiple support calculations. Thus, COP-efo has to generate more occurrences of superpatterns. Fig. ~\ref{ElementsComparison} also validates this claim. On SDB6, COP-efo makes 101,420 comparisons, while COP-Miner makes only 609. Moreover, from Fig. ~\ref{CandidatesNumber}, COP-Miner and COP-efo each generate 11 candidate patterns, since they all apply the CSS algorithm to calculate the supports of superpatterns. The same effect can be found on all other datasets. Hence, the method of preparation stage is more efficient than EFO-Miner in calculating the occurrences of pattern \textbf{q} and its fusion patterns, and COP-Miner outperforms COP-efo.

4. COP-Miner runs faster and requires less memory than both COP-enum and COP-noending. For example, on SDB8, from Fig. ~\ref{RunningTime}, COP-enum and COP-noending require 0.492s and 0.478s of running time, respectively, while COP-Miner needs 0.464s. In Fig. ~\ref{MemoryUsage}, COP-enum and COP-noending occupy 10.762MB and 9.865MB of running memory, respectively, while COP-Miner occupies only 9.461Mb. The reasons are as follows. The enumeration method generates more candidate patterns than the pattern fusion strategy. Fig. ~\ref{CandidatesNumber} also validates this claim. On SDB8, COP-enum checks 1,039 candidate patterns, while COP-Miner only checks 97. The ending strategy can further reduce the number of occurrences of superpatterns. Fig. ~\ref{ElementsComparison} also validates this claim. On SDB8, COP-noending makes 3,345 comparisons, while COP-Miner makes 3,298. The same effect can be found on all other datasets. Therefore, COP-Miner outperforms COP-enum and COP-noending.

5. COP-Miner runs faster and consumes less memory than FAV-sliding. For example, on SDB2, it can be seen from Fig. ~\ref{RunningTime} that FAV-sliding requires 0.297s, while COP-Miner needs 0.230s. In Fig. ~\ref{MemoryUsage}, FAV-sliding occupies 11.886Mb, while COP-Miner occupies 10.189Mb. The reason is as follows. We know that FAV-sliding adopts sliding windows to calculate the occurrences of all possible superpatterns. This method can be seen as a linear search to find the co-occurrence patterns. Thus, FAV-sliding has to check more candidate patterns. Fig. ~\ref{CandidatesNumber} also validates this claim. On SDB2, FAV-sliding checks 28 candidate patterns, while COP-Miner only checks 19. The same effect can be found on all other datasets. Hence, COP-Miner outperforms FAV-Sliding.

In summary, the experimental results verify that COP-Miner yields better performance than the other nine competing algorithms.

\subsection{Scalability} \label {subsect5.4}

To evaluate the scalability of COP-Miner, we selected SDB8 with length 60K as the experimental dataset. Moreover, we created 120K, 180K, 240K, 300K, 360K, 420K, and 480K, which were two, three, four, five, six, seven, and eight times the size of SDB8, respectively. We set \textit{minsup}=30 and the prefix pattern was (1,3,2,5,4,6). Comparisons of running time and memory usage are shown in Figs. ~\ref{ScalabilityRunning} and ~\ref{ScalabilityMemory}, respectively.

\begin{figure}[htbp]
	\centering
	\includegraphics[width=0.65\linewidth]{"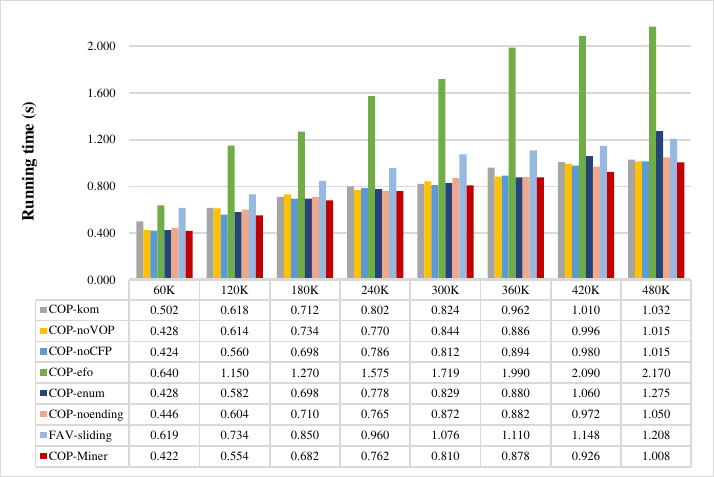"}
	\caption{Comparison of running time with different dataset sizes}
	\label{ScalabilityRunning}
\end{figure}

\begin{figure}[htbp]
	\centering
	\includegraphics[width=0.65\linewidth]{"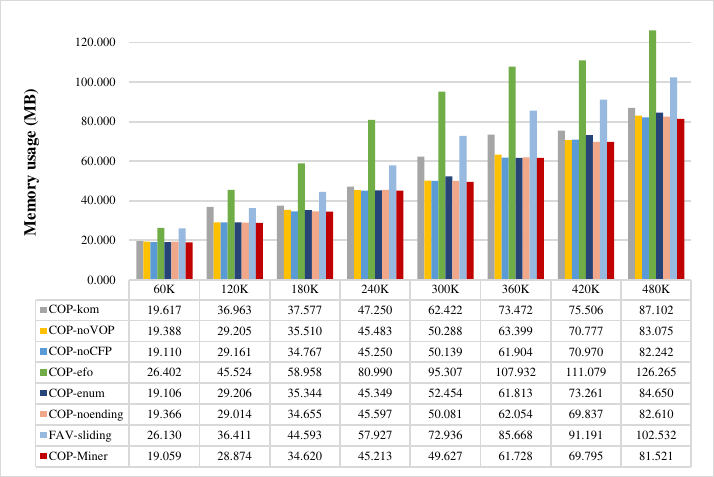"}
	\caption{Comparison of memory usage with different dataset sizes}
	\label{ScalabilityMemory}
\end{figure}

The results give rise to the following observations.

From Figs. ~\ref{ScalabilityRunning} and ~\ref{ScalabilityMemory}, we know that both the running time and memory usage of COP-Miner show linear growth with the increase of the size of the dataset. For example, the size of 480K is eight times of SDB8. COP-Miner takes 0.422s on 60K, and 1.008s on 480K, giving 1.008/0.422=2.389. The memory usage of COP-Miner is 19.059MB on 60K, and 81.521MB on 480K, giving 81.521/19.059=4.277. Thus, the running time and memory usage grow significantly slower than the increase in the dataset size. These results indicate that the time and space complexities of COP-Miner are positively related to the dataset size. This phenomenon can be found on all other datasets. More importantly, we can see that COP-Miner is about two times faster than COP-efo, and the memory usage of COP-efo also exceeds COP-Miner. Therefore, we conclude that COP-Miner has strong scalability, since the mining performance does not degrade as the dataset size increases.

\subsection{Influence of parameters} \label {subsect5.5}

We evaluated the effects of prefix patterns with different lengths and \textit{minsup} on running time. To examine the influence, we selected SDB8 as the experimental dataset, which is the largest dataset in Table ~\ref{tab3}, and selected COP-kom, COP-noVOP, COP-noCFP, COP-efo, COP-enum, and COP-noending as the competing algorithms.

1) Influence of prefix patterns with different lengths:

\textcolor {black}{ To report the influence of prefix patterns with different lengths on running time, we set \textit{minsup}=12. The prefix patterns on SDB8 are (1,3,2), (1,3,2,4), (1,3,2,5,4), (1,3,2,5,4,6), and (1,4,2,6,5,7,3) with lengths 3, 4, 5, 6, and 7, respectively, and all the seven algorithms discover 2121, 1059, 94, 64, and 36 COPs for the five prefix patterns with different lengths, respectively. The comparisons of running time and numbers of candidate patterns on SDB8 are shown in Figs. ~\ref{lenruntime} and \ref {lencandidate}, respectively.}

\begin{figure}[htbp]
	\centering
	\includegraphics[width=0.65\linewidth]{"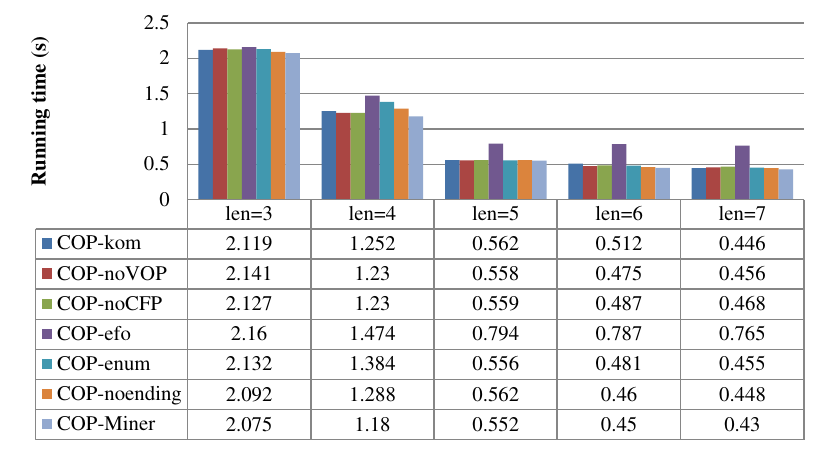"}
	\caption{Comparison of running time of prefix patterns with different lengths on SDB8}
	\label{lenruntime}
\end{figure}

\begin{figure}[htbp]
	\centering
	\includegraphics[width=0.65\linewidth]{"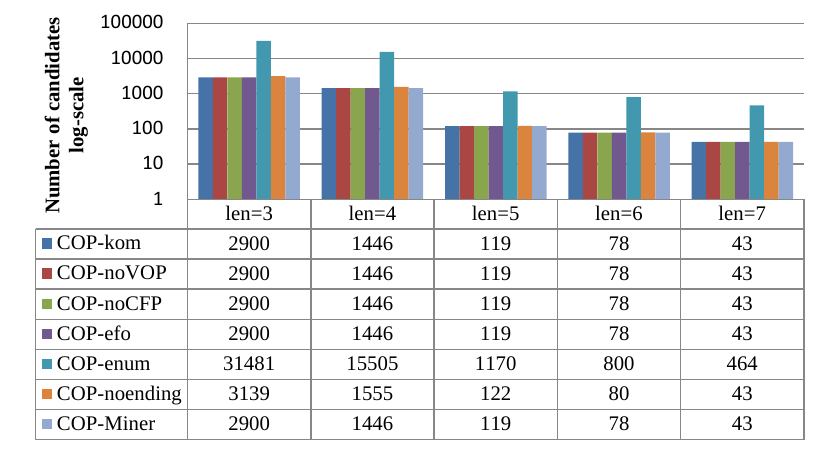"}
	\caption{\textcolor {black}{Comparison of numbers of candidate patterns of prefix patterns with different lengths on SDB8}}
	\label{lencandidate}
\end{figure}

The results give rise to the following observations.

\textcolor {black}{As the length of the prefix pattern increases, the running time decreases, since the numbers of COPs and candidate patterns decrease. For example, from Figs. ~\ref{lenruntime} and \ref {lencandidate}, when len=3, COP-Miner takes 2.075s, checks 2900 candidate patterns, and discovers 2121 COPs, whereas when len=7, COP-Miner takes 0.430s, checks 43 candidate patterns, and discovers 36 COPs. This phenomenon can be found on all other algorithms. The reason is as follows. As the length of the prefix pattern increases, fewer patterns can be found, thereby reducing the number of candidate patterns generated. We know that the fewer the candidate patterns, the shorter the running time. Thus, the running time also decreases. More importantly, COP-Miner outperforms other competing algorithms for any length prefix pattern, since it is the fastest algorithm, as explained in the analysis in Section \ref {subsect5.3}.}

2) Influence of different \textit{minsup}:

\textcolor {black}{To report the influence of different \textit{minsup} on running time, we set \textit{minsup}=8, 10, 12, 14, 16. The prefix pattern on SDB8 is (1,3,2,5,4,6). The seven algorithms discover 97, 82, 64, 48, and 39 COPs for five different \textit{minsup}, respectively. The comparisons of running time and numbers of candidate patterns on SDB8 are shown in Figs. ~\ref{minsupruntime} and \ref{minsupcandidate}, respectively.}

\begin{figure}[htbp]
	\centering
	\includegraphics[width=0.65\linewidth]{"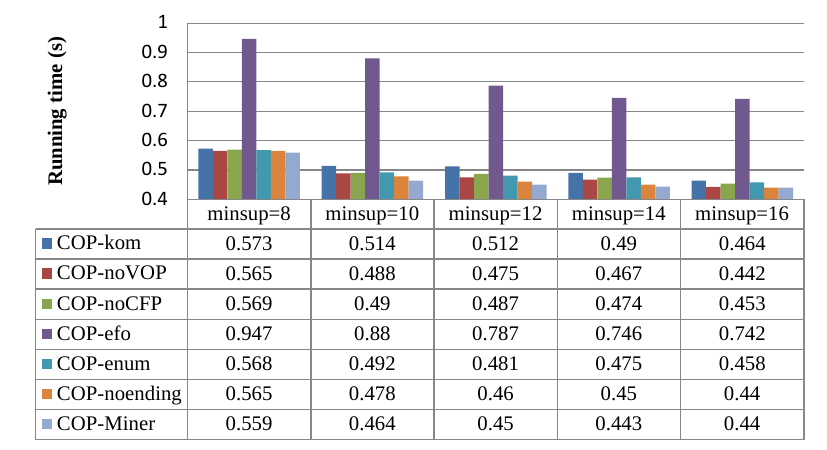"}
	\caption{Comparison of running time with different \textit{minsup} on SDB8}
	\label{minsupruntime}
\end{figure}

\begin{figure}[htbp]
	\centering
	\includegraphics[width=0.65\linewidth]{"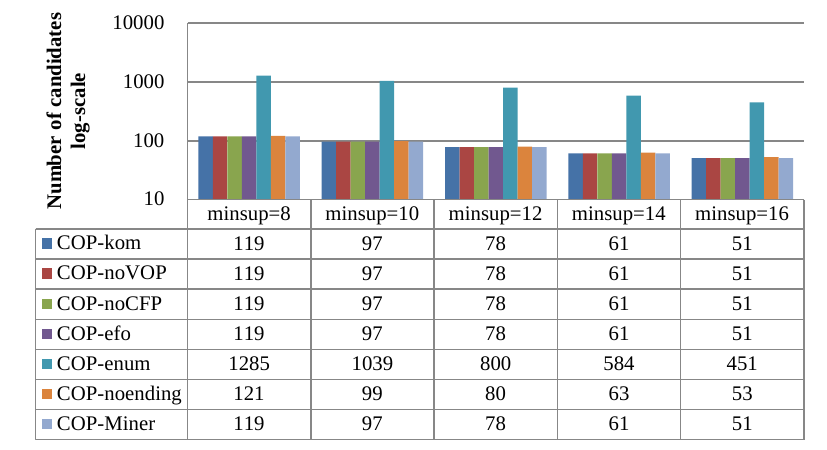"}
	\caption{Comparison of numbers of candidate patterns with different \textit{minsup} on SDB8}
	\label{minsupcandidate}
\end{figure}

The results give rise to the following observations.

\textcolor {black}{As the \textit{minsup} increases, the running time decreases, since the numbers of candidate patterns and COPs decrease. For example, from Figs. ~\ref{minsupruntime} and \ref{minsupcandidate}, when \textit{minsup}=8, COP-Miner takes 0.559s, checks 119 candidate patterns, and discovers 97 COPs, whereas when \textit{minsup}=16, COP-Miner takes 0.440s, checks 51 candidate patterns, and discovers 39 COPs. This phenomenon can be found on all other algorithms. The reason for this is as follows. We know that for a certain prefix pattern, fewer patterns can be frequent patterns and candidate patterns as \textit{minsup} increases. Hence, fewer COPs can be found and the running time also decreases. Moreover, COP-Miner outperforms other competing algorithms for any \textit{minsup}, since it is the fastest algorithm, as explained in the analysis in Section \ref {subsect5.3}.}

\subsection{Case study} \label {subsect5.6}

In this section, we predict intervals of future values based on trend patterns found in historical time series by using COP mining. Fig. ~\ref{TrendPredict} gives the specific prediction by using COPs obtained on the training set of SDB6. The left side of the vertical line is the prefix pattern \textbf{p} = (1,3,2,4) obtained from the historical data. The right side of the vertical line is based on the mined COPs to predict the trend that may occur in the next stage. {We selected the two patterns with the highest frequency and the second highest frequency on the training set to predict the possible intervals at the sixth time point. The two dotted lines represent the possible interval, and the results predicted that the sixth hour should be warmer than 29.3 degrees Celsius. The solid line represents the actual trend and the actual temperature at that time was 29.8.}

\begin{figure}[htbp]
	\centering
	\includegraphics[width=0.65\linewidth]{"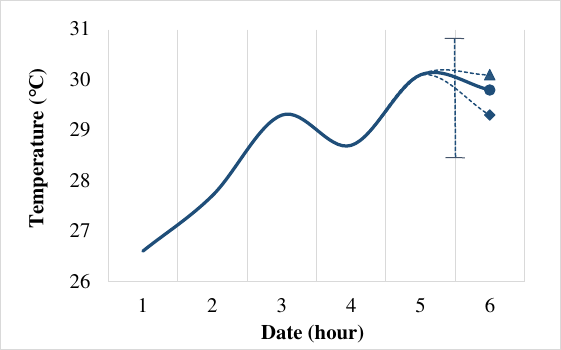"}
	\caption{{COPs of pattern \textbf{p} = (1, 3, 2, 4) on the training set of SDB6}}
	\label{TrendPredict}
\end{figure}

To compare the prediction performance of COP-Miner, we selected two commonly used models: ARIMA  \cite{LiARIMA2022} and DTW \cite{KeoghDTW2005} as comparison models, which predict the OPPs of subsequences similar to the prefix pattern on the original time series. Moreover, we also employed COP-noLEP as a competing algorithm to generate COPs, which does not use the LEP algorithm to eliminate the distortion of time series. We selected SDB1-{SDB9} to evaluate the prediction performance and divided each dataset into a training set and a test set in a ratio of 8:2. The prefix patterns used for training are shown in Table ~\ref{tab4}.

\begin{table}[htbp]
	\centering
		\scriptsize
		\caption{{Prefix patterns}} 
		\label{tab4} 
		\resizebox{9cm}{!}{
			\begin{tabular}{lclclc} 
				\toprule 
				Dataset & Prefix pattern & Dataset & Prefix pattern & Dataset & Prefix pattern \\\hline
				SDB1 & (1,3,2,4) & SDB2 & (1,4,2,3) & SDB3 & (3,2,4,1)\\
				SDB4 & (3,1,5,2,4) & SDB5 & (3,1,2,4,5) & SDB6 & (1,4,2,3)\\
				SDB7 & (1,3,2) & SDB8 & (3,2,5,1,4) & {SDB9} & {(2,4,3,6,5,8,1,7)}\\
				\bottomrule
		\end{tabular}}
	\end{table}
	
	To evaluate the performance of prediction based on the co-occurrence patterns, there are three commonly used criteria: precision \textit{Pr} = $\frac{TP}{TP+FP}$, recall \textit{Re} = $\frac{TP}{TP+FN}$, and F1-score \textit{F}1 = $\frac{{2 \times Pr \times Re}}{Pr+Re}$, where \textit{TP}, \textit{FP}, and \textit{FN} are the numbers of the correctly predicted patterns, wrongly predicted patterns, and unpredicted patterns in the test set, respectively. There is no wrongly predicted pattern, i.e. \textit{FP}=0, which means that the frequent co-occurrence patterns mined in the training set all appear in the test set of these nine datasets. Thus, \textit{Pr}=1. Therefore, we did not choose precision as the criterion, but rather chose recall and F1-score. Recall can analyze the weight of the co-occurrence patterns mined through the training set in the test set. F1-score is a comprehensive consideration of precision and recall, which can more comprehensively evaluate the accuracy of the prediction results. Taking SDB8 as an example, we selected the top two COPs in the training set for COP-Miner: {(4,3,6,1,5,2), (4,3,6,2,5,1)}. {This result means that the value of the sixth point should be smaller than the corresponding true value of rank 3.} In the test set, all length-6 patterns with the prefix pattern (3,2,5,1,4) are (3,2,6,1,5,4), (4,3,6,1,5,2), (4,2,6,1,5,3), and (4,3,6,2,5,1), and their supports are 20, 118, 54, and 132, respectively. The sum of supports of patterns (4,3,6,1,5,2), (4,3,6,2,5,1) and total supports are 118+132=250 and 324, respectively. Thus, recall and F1-score of COP-Miner on the test set are \textit{Re}=$\frac{250}{324}$=0.7716 and \textit{F}1=$\frac{{2 \times 0.7716}}{1+0.7716}$=0.8711. Comparison of recall and F1-score results are shown in Figs. ~\ref{Recall} and \ref{F1score}, respectively.
	
	\begin{figure}[htbp]
		\centering
		\includegraphics[width=0.65\linewidth]{"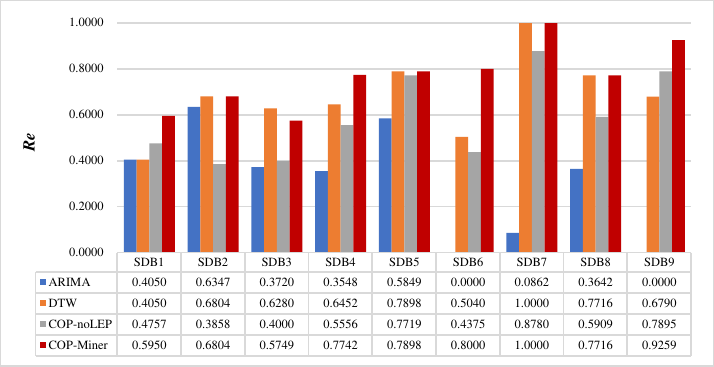"}
		\caption{{Comparison of recall}}
		\label{Recall}
	\end{figure}
	
	\begin{figure}[htbp]
		\centering
		\includegraphics[width=0.65\linewidth]{"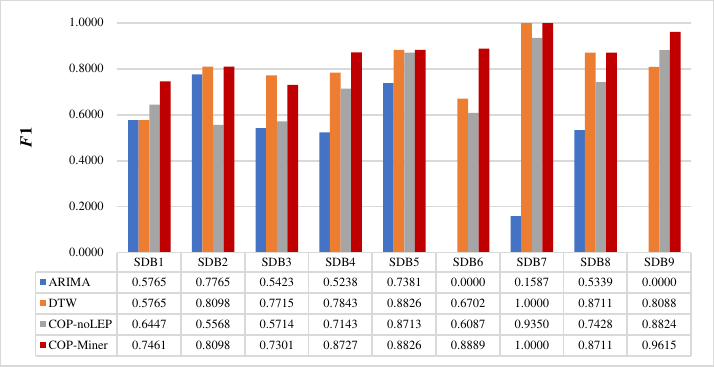"}
		\caption{{Comparison of F1-score}}
		\label{F1score}
	\end{figure}
	
	The results give rise to the following observations.
	
	COP-Miner has higher accuracy than ARIMA and DTW prediction models. For example, on SDB1, the recall and F1-score of COP-Miner are 0.5950 and 0.7461, respectively, while those of ARIMA and DTW both are 0.4050 and 0.5765, respectively. This phenomenon can be found on most other datasets. The reason is as follows. The prediction results of ARIMA and DTW cannot accurately describe the trend characteristics, while COP-Miner is closer to the real data as a whole. Therefore, COP-Miner has better prediction accuracy than ARIMA and DTW.
	
	COP-Miner has higher accuracy than COP-noLEP. For example, on SDB4, the recall and F1-score of COP-Miner are 0.7742 and 0.8727, respectively, while those of COP-noLEP are 0.5556 and 0.7143. This phenomenon can be found on all other datasets. The reason for this is as follows. The difference between COP-Miner and COP-noLEP is that COP-Miner uses the LEP algorithm to extract keypoints, while COP-noLEP does not. The results indicate that LEP can improve accuracy by removing unimportant trend information in time series. Therefore, COP-Miner yields better prediction accuracy than COP-noLEP.
	
	\section{Conclusion} \label {section6}
	
	
	To discover all COPs with the same prefix pattern efficiently, we propose the COP-Miner algorithm, which consists of three parts: 1) extracting keypoints, 2) preparation stage, and 3) iteratively calculating supports and mining frequent COPs. To reduce distortion interference in time series and avoid mining redundant patterns, we propose the LEP algorithm to extract local extreme points and to obtain keypoint sub-time series and keypoint time series, which can effectively improve the prediction performance compared to mining directly on the original time series. To prepare for the first round of mining, we propose the method of the preparation stage, which contains four steps: obtaining the suffix OPP of the keypoint sub-time series, calculating the occurrences of the suffix OPP, verifying the occurrences of the keypoint sub-time series, and calculating the occurrences of all fusion patterns of keypoint sub-time series. To calculate the support effectively, we propose the CSS algorithm, which adopts a pattern fusion method to generate candidate patterns and an ending strategy to further prune redundant patterns. Experimental results from nine real datasets verified that COP-Miner yields better running performance and scalability than other ten competing algorithms, especially on large-scale datasets, COP-Miner is about 40 times and 4 times faster than OPP-Miner and EFO-Miner, respectively. More importantly, COPs with keypoint alignment yield better prediction performance.
 
 \section*{Acknowledgement}

This work was supported by Hebei Social Science Foundation Project (No. HB19GL055). 

\bibliographystyle{plain}
\bibliography{main}

\begin{thebibliography}{10}

\bibitem{Amagata1co-occurrence2019}
Daichi Amagata and Takahiro Hara.
\newblock Mining top-k co-occurrence patterns across multiple streams.
\newblock {\em IEEE Transactions on Knowledge and Data Engineering}, 29(10):2249--2262, 2019.

\bibitem{Aoepisode2019}
Xiang Ao, Haoran Shi, Jin Wang, Luo Zuo, Hongwei Li, and Qing He.
\newblock Large-scale frequent episode mining from complex event sequences with hierarchies.
\newblock {\em ACM Transactions on Intelligent Systems and Technology}, 10(4):36, 2019.

\bibitem{Baiclassification2021}
Bing Bai, Guiling Li, Senzhang Wang, Zongda Wu, and Wenhe Yan.
\newblock Time series classification based on multi-feature dictionary representation and ensemble learning.
\newblock {\em Expert Systems with Applications}, 169:114162, 2021.

\bibitem{ARIMA}
G.~E.~P. Box and David~A. Pierce.
\newblock Distribution of residual autocorrelations in autoregressive-integrated moving average time series models.
\newblock {\em Journal of the American Statistical Association}, 65(332):1509--1526, 1970.

\bibitem{ChakrabartiLocally2002}
Kaushik Chakrabarti, Eamonn Keogh, Sharad Mehrotra, and Michael Pazzani.
\newblock Locally adaptive dimensionality reduction for indexing large time series databases.
\newblock {\em ACM Transactions on Database Systems}, 27(2):188--228, 2002.

\bibitem{threeway}
Shuhui Cheng, Youxi Wu, Yan Li, Fang Yao, and Fan Min.
\newblock {TWD-SFNN}: Three-way decisions with a single hidden layer feedforward neural network.
\newblock {\em Information Sciences}, 579:15–32, 2021.

\bibitem{braineeg}
Chenglong Dai, Jia Wu, Dechang Pi, Stefanie~I. Becker, Lin Cui, Qin Zhang, and Blake Johnson.
\newblock Brain {EEG} time-series clustering using maximum-weight clique.
\newblock {\em IEEE Transactions on Cybernetics}, 52(1):357--371, 2022.

\bibitem{Dong2Top-k2019}
Xiangjun Dong, Ping Qiu, Jinhu Lü, Longbing Cao, and Tiantian Xu.
\newblock Mining top-k useful negative sequential patterns via learning.
\newblock {\em IEEE Transactions on Neural Networks and Learning Systems}, 30(9):2764--2778, 2019.

\bibitem{patternmining}
Philippe Fournier{-}Viger, Wensheng Gan, Youxi Wu, Mourad Nouioua, Wei Song, Tin~C. Truong, and Hai~Van Duong.
\newblock Pattern mining: Current challenges and opportunities.
\newblock {\em DASFAA 2022: Database Systems for Advanced Applications}, pages 34--49, 2022.

\bibitem{gantkdd}
Wensheng Gan, Jerry~Chun{-}Wei Lin, Philippe Fournier{-}Viger, Han{-}Chieh Chao, and Philip~S. Yu.
\newblock A survey of parallel sequential pattern mining.
\newblock {\em ACM Transactions on Knowledge Discovery from Data}, 13(3):25, 2019.

\bibitem{Gao3Top-k2016}
Chao Gao, Lei Duan, Guozhu Dong, Haiqing Zhang, Hao Yang, and Changjie Tang.
\newblock Mining top-k distinguishing sequential patterns with flexible gap constraints.
\newblock In {\em International Conference on Web-Age Information Management}, pages 82--94. Springer, 2016.

\bibitem{Gharghabi2timeseries2018}
Shaghayegh Gharghabi, Shima Imani, Anthony Bagnall, Amirali Darvishzadeh, and Eamonn Keogh.
\newblock Matrix profile {XII}: {MPdist}: A novel time series distance measure to allow data mining in more challenging scenarios.
\newblock In {\em IEEE International Conference on Data Mining}, pages 965--970. IEEE, 2018.

\bibitem{Gharghabi1timeseries2020}
Shaghayegh Gharghabi, Shima Imani, Anthony Bagnall, Amirali Darvishzadeh, and Eamonn Keogh.
\newblock An ultra-fast time series distance measure to allow data mining in more complex real-world deployments.
\newblock {\em Data Mining and Knowledge Discovery}, 34(4):1104--1135, 2020.

\bibitem{oneoffguo}
Dan Guo, Xuegang Hu, Fei Xie, and Xindong Wu.
\newblock Pattern matching with wildcards and gap-length constraints based on a centrality-degree graph.
\newblock {\em Applied Intelligence}, 39(1):57--74, 2013.

\bibitem{danguoco}
Dan Guo, Ermao Yuan, Xuegang Hu, and Xindong Wu.
\newblock Co-occurrence pattern mining based on a biological approximation scoring matrix.
\newblock {\em Pattern Analysis and Applications}, 21(4):977--996, 2018.

\bibitem{Hanclustering2022}
Nan Han, Shaojie Qiao, Kun Yue, Jianbin Huang, Tingting~Tang Qiang~He, Faliang Huang, Chunlin He, and Chang-An Yuan.
\newblock Algorithms for trajectory points clustering in location-based social networks.
\newblock {\em ACM Transactions on Intelligent Systems and Technology}, 13(3):43, 2022.

\bibitem{zengyouhe}
Zengyou He, Ziyao Wu, Guangyao Xu, Yan Liu, and Quan Zou.
\newblock Decision tree for sequences.
\newblock {\em IEEE Transactions on Knowledge and Data Engineering}, 35(1):251--263, 2023.

\bibitem{ganutil}
Gengsen Huang, Wensheng Gan, Jian Weng, and Philip~S. Yu.
\newblock {US-rule}: Discovering utility-driven sequential rules.
\newblock {\em ACM Transactions on Knowledge Discovery from Data}, 17(1):10, 2023.

\bibitem{KeoghDTW2005}
Eamonn Keogh and Chotirat~Ann Ratanamahatana.
\newblock Exact indexing of dynamic time warping.
\newblock {\em Knowledge and Information Systems}, 7:358--386, 2005.

\bibitem{Kimmatching2014}
Jinil Kim, Peter Eades, Rudolf Fleischer, Seok-Hee Hong, Costas~S. Iliopoulos, Kunsoo Park, Simon~J. Puglisi, and Takeshi Tokuyama.
\newblock Order-preserving matching.
\newblock {\em Theoretical Computer Science}, 525:68--79, 2014.

\bibitem{oppmatchingscaling}
Youngho Kim, Munseong Kang, Joong~Chae Na, and Jeong~Seop Sim.
\newblock Order-preserving pattern matching with scaling.
\newblock {\em Information Processing Letters}, 180:106333, 2023.

\bibitem{Lahrechekeypoints2021}
Abdelmadjid Lahreche and Bachir Boucheham.
\newblock A fast and accurate similarity measure for long time series classification based on local extrema and dynamic time warping.
\newblock {\em Expert Systems with Applications}, (168):114374, 2021.

\bibitem{Leeprediction2022}
Wu~Lee, Yuliang Shi, Hongfeng Sun, Lin Cheng, Kun Zhang, Xinjun Wang, and Zhiyong Chen.
\newblock {MSIPA}: Multi-scale interval pattern-aware network for {ICU} transfer prediction.
\newblock {\em ACM Transactions on Knowledge Discovery from Data}, 16(1):1--17, 2022.

\bibitem{li2012tkdd}
Chun Li, Qingyan Yang, Jianyong Wang, and Ming Li.
\newblock Efficient mining of gap-constrained subsequences and its various applications.
\newblock {\em ACM Transactions on Knowledge Discovery from Data}, 6(1):2, 2012.

\bibitem{stockpriceprediction}
Qing Li, Jinghua Tan, Jun Wang, and Hsinchun Chen.
\newblock A multimodal event-driven {LSTM} model for stock prediction using online news.
\newblock {\em IEEE Transactions on Knowledge and Data Engineering}, 33(10):3323--3337, 2021.

\bibitem{Licontrast2020}
Qingzhe Li, Liang Zhao, Yi-Ching Lee, and Jessica Lin.
\newblock Contrast pattern mining in paired multivariate time series of a controlled driving behavior experiment.
\newblock {\em ACM Transactions on Spatial Algorithms and Systems}, 6(4):25, 2020.

\bibitem{temperatureprediction}
Yan Li, Lei Yu, Jing Liu, Lei Guo, Youxi Wu, and Xindong Wu.
\newblock {NetDPO}: (delta, gamma)-approximate pattern matching with gap constraints under one-off condition.
\newblock {\em Applied Intelligence}, 52(11):12155–12174, 2022.

\bibitem{mcor}
Yan Li, Chang Zhang, Jie Li, Wei Song, Zhenlian Qi, Youxi Wu, and Xindong Wu.
\newblock {MCoR-Miner}: Maximal co-occurrence nonoverlapping sequential rule mining.
\newblock {\em IEEE Transactions on Knowledge and Data Engineering}, 35(9):9531--9546, 2023.

\bibitem{Li1maximal2022}
Yan Li, Shuai Zhang, Lei Guo, Jing Liu, Youxi Wu, and Xindong Wu.
\newblock {NetNMSP}: Nonoverlapping maximal sequential pattern mining.
\newblock {\em Applied Intelligence}, 52(9):9861--9884, 2022.

\bibitem{LiARIMA2022}
Yangfan Li, Kenli Li, Cen Chen, Xu~Zhou, Zeng Zeng, and Keqin Li.
\newblock Modeling temporal patterns with dilated convolutions for time-series forecasting.
\newblock {\em ACM Transactions on Knowledge Discovery from Data}, 16(1):1--22, 2022.

\bibitem{Lirecognition2011}
Zhenhui Li, Jiawei Han, Ming Ji, Lu-An Tang, Yintao Yu, Bolin Ding, Jae-Gil Lee, and Roland Kays.
\newblock {MoveMine}: Mining moving object data for discovery of animal movement patterns.
\newblock {\em ACM Transactions on Intelligent Systems and Technology}, 2(4):37, 2011.

\bibitem{LinExperiencing2007}
Jessica Lin, Eamonn Keogh, Li~Wei, and Stefano Lonardi.
\newblock Experiencing {SAX}: A novel symbolic representation of time series.
\newblock {\em Data Mining and Knowledge Discovery}, 15:107--144, 2007.

\bibitem{gan2022}
Qi~Lin, Wensheng Gan, Yongdong Wu, Jiahui Chen, and Chien-Ming Chen.
\newblock Smart system: Joint utility and frequency for pattern classification.
\newblock {\em ACM Transactions on Management Information Systems}, 13(4):43, 2022.

\bibitem{featureselection}
Kunpeng Liu, Yanjie Fu, Le~Wu, Xiaolin Li, Charu Aggarwal, and Hui Xiong.
\newblock Automated feature selection: A reinforcement learning perspective.
\newblock {\em IEEE Transactions on Knowledge and Data Engineering}, 35(3):2272--2284, 2023.

\bibitem{deepforest}
Pengfei Ma, Youxi Wu, Yan Li, Lei Guo, He~Jiang, Xingquan Zhu, and Xindong Wu.
\newblock {HW-Forest}: Deep forest with hashing screening and window screening.
\newblock {\em ACM Transactions on Knowledge Discovery from Data}, 16(6):123, 2022.

\bibitem{Mavroudopoulos2022}
Ioannis Mavroudopoulos and Anastasios Gounaris.
\newblock {SIESTA}: A scalable infrastructure of sequential pattern analysis.
\newblock {\em IEEE Transactions on Big Data}, 9(3):975--990, 2023.

\bibitem{Min1threeway2020}
Fan Min, Zhi-Heng Zhang, Wen jie Zhai, and Rong-Ping Shen.
\newblock Frequent pattern discovery with tri-partition alphabets.
\newblock {\em Information Sciences}, 507:715--732, 2020.

\bibitem{philippe}
Saqib Nawaz, Philippe Fournier{-}Viger, M.~Zohaib Nawaz, Guoting Chen, and Youxi Wu.
\newblock {MalSPM}: Metamorphic malware behavior analysis and classification using sequential pattern mining.
\newblock {\em Computers \& Security}, 118:102741, 2022.

\bibitem{Nawtmis}
Saqib Nawaz, Philippe Fournier{-}Viger, Unil Yun, Youxi Wu, and Wei Song.
\newblock Mining high utility itemsets with hill climbing and simulated annealing.
\newblock {\em ACM Transactions on Management Information Systems}, 13(1):4, 2022.

\bibitem{taxiprediction}
Filipe Rodrigues, Loulia Markou, and Francisco~C. Pereira.
\newblock Combining time-series and textual data for taxi demand prediction in event areas: A deep learning approach.
\newblock {\em Information Fusion}, 49:120--129, 2019.

\bibitem{StoracePiecewise2004}
Marco Storace and Oscar~De Feo.
\newblock Piecewise-linear approximation of nonlinear dynamical systems.
\newblock {\em IEEE Transactions on Circuits and Systems I: Regular Papers}, 51(4):830--842, 2004.

\bibitem{Truong2closed2019}
Tin Truong, Hai Duong, B~Le, and Philippe Fournier{-}Viger.
\newblock {FMaxCloHUSM}: An efficient algorithm for mining frequent closed and maximal high utility sequences.
\newblock {\em Engineering Applications of Artificial Intelligence}, 85:1--20, 2019.

\bibitem{Vo2maximal2017}
Bay Vo, Sang Pham, Tuong Le, and Zhi-Hong Deng.
\newblock A novel approach for mining maximal frequent patterns.
\newblock {\em Expert Systems with Applications}, 73:178--186, 2017.

\bibitem{Wang2colocation2018}
Lizhen Wang, Xuguang Bao, and Lihua Zhou.
\newblock Redundancy reduction for prevalent co-location patterns.
\newblock {\em IEEE Transactions on Knowledge and Data Engineering}, 30(1):142--155, 2018.

\bibitem{Wang1colocation2022}
Lizhen Wang, Yuan Fang, and Lihua Zhou.
\newblock Preference-based spatial co-location pattern mining.
\newblock {\em Springer Singapore}, 2022.

\bibitem{pyramidpattern}
Wei Wang, Jing Tian, Fang Lv, Guodong Xin, Yingfan Ma, and Bailing Wang.
\newblock Mining frequent pyramid patterns from time series transaction data with custom constraints.
\newblock {\em Computers \& Security}, 100:102088, 2021.

\bibitem{wangapin}
Yuehua Wang, Youxi Wu, Yan Li, Fang Yao, Philippe Fournier{-}Viger, and Xindong Wu.
\newblock Self-adaptive nonoverlapping sequential pattern mining.
\newblock {\em Applied Intelligence}, 52(6):6646–6661, 2022.

\bibitem{wutmis2022}
Xindong Wu, Xingquan Zhu, and Minghui Wu.
\newblock The evolution of search: Three computing paradigms.
\newblock {\em ACM Transactions on Management Information Systems}, 13(2):20, 2022.

\bibitem{wunegativetkdd}
Youxi Wu, Mingjie Chen, Yan Li, Jing Liu, Zhao Li, Jinyan Li, and Xindong Wu.
\newblock {ONP-Miner}: One-off negative sequential pattern mining.
\newblock {\em ACM Transactions on Knowledge Discovery from Data}, 17(3):37, 2023.

\bibitem{seqthreeway}
Youxi Wu, Shuhui Cheng, Yan Li, Rongjie Lv, and Fan Min.
\newblock {STWD-SFNN}: Sequential three-way decisions with a single hidden layer feedforward neural network.
\newblock {\em Information Sciences}, 632:299--323, 2023.

\bibitem{gengmengkbs}
Youxi Wu, Meng Geng, Yan Li, Lei Guo, Zhao Li, Philippe Fournier{-}Viger, Xingquan Zhu, and Xindong Wu.
\newblock {HANP-Miner}: High average utility nonoverlapping sequential pattern mining.
\newblock {\em Knowledge-Based Systems}, 229:107361, 2021.

\bibitem{WuOPP-Miner2022}
Youxi Wu, Qian Hu, Yan Li, Lei Guo, Xingquan Zhu, and Xindong Wu.
\newblock {OPP-Miner}: Order-preserving sequential pattern mining for time series.
\newblock {\em IEEE Transactions on Cybernetics}, 53(5):3288--3300, 2023.

\bibitem{WuNTP-Miner2022}
Youxi Wu, Lanfang Luo, Yan Li, Lei Guo, Philippe Fournier{-}Viger, Xingquan Zhu, and Xindong Wu.
\newblock {NTP-Miner}: Nonoverlapping three-way sequential pattern mining.
\newblock {\em ACM Transactions on Knowledge Discovery from Data}, 16(3):51, 2022.

\bibitem{copp}
Youxi Wu, Yufei Meng, Yan Li, Lei Guo, Xingquan Zhu, Philippe Fournier{-}Viger, and Xindong Wu.
\newblock {COPP-Miner}: Top-k contrast order-preserving pattern mining for time series classification.
\newblock {\em IEEE Transactions on Knowledge and Data Engineering}, 2023.

\bibitem{nosep}
Youxi Wu, Yao Tong, Xingquan Zhu, and Xindong Wu.
\newblock {NOSEP}: Nonoverlapping sequence pattern mining with gap constraints.
\newblock {\em IEEE Transactions on Cybernetics}, 48(10):2809--2822, 2018.

\bibitem{wu2014apin}
Youxi Wu, Lingling Wang, Jiadong Ren, Wei Ding, and Xindong Wu.
\newblock Mining sequential patterns with periodic wildcard gaps.
\newblock {\em Applied Intelligence}, 41(1):99--116, 2014.

\bibitem{WuOWSP-Miner2022}
Youxi Wu, Xiaohui Wang, Yan Li, Lei Guo, Zhao Li, Ji~Zhang, and Xindong Wu.
\newblock {OWSP-Miner}: Self-adaptive one-off weak-gap strong pattern mining.
\newblock {\em ACM Transactions on Management Information Systems}, 13(3):25, 2022.

\bibitem{Wu1Top-k2021}
Youxi Wu, Yuehua Wang, Yan Li, Xingquan Zhu, and Xindong Wu.
\newblock Top-k self-adaptive contrast sequential pattern mining.
\newblock {\em IEEE Transactions on Cybernetics}, 52(11):11819--11833, 2022.

\bibitem{WuOPR-Miner2022}
Youxi Wu, Xiaoqian Zhao, Yan Li, Lei Guo, Xingquan Zhu, Philippe Fournier{-}Viger, and Xindong Wu.
\newblock {OPR-Miner}: Order-preserving rule mining for time series.
\newblock {\em IEEE Transactions on Knowledge and Data Engineering}, 35(11):11722--11735, 2023.

\bibitem{Wu1closed2020}
Youxi Wu, Changrui Zhu, Yan Li, Lei Guo, and Xindong Wu.
\newblock {NetNCSP}: Nonoverlapping closed sequential pattern mining.
\newblock {\em Knowledge-Based Systems}, 196:105812, 2020.

\bibitem{keywu2017}
Fei Xie, Xindong Wu, and Xingquan Zhu.
\newblock Efficient sequential pattern mining with wildcards for keyphrase extraction.
\newblock {\em Knowledge-Based Systems}, 115:27--39, 2017.

\bibitem{Yanrecognition2020}
Surong Yan, Kwei-Jay Lin, Xiaolin Zheng, and Wenyu Zhang.
\newblock Using latent knowledge to improve real-time activity recognition for smart {IoT}.
\newblock {\em IEEE Transactions on Knowledge and Data Engineering}, 32(3):574--587, 2020.

\bibitem{marketprediction}
Jinyoung Yeo, Seungwon Hwang, Sungchul Kim, Eunyee Koh, and Nedim Lipka.
\newblock Conversion prediction from clickstream: Modeling market prediction and customer predictability.
\newblock {\em IEEE Transactions on Knowledge and Data Engineering}, 32(2):246--259, 2020.

\bibitem{Zhan2threeway2022}
Jianming Zhan, Jin Ye, Weiping Ding, and Peide Liu.
\newblock A novel three-way decision model based on utility theory in incomplete fuzzy decision systems.
\newblock {\em IEEE Transactions on Fuzzy Systems}, 30(7):2210--2226, 2022.

\bibitem{Zhangbioinformatics2021}
Shuainan Zhang, Yafeng He, Xuzhao Li, Wude Yang, and Ying Zhou.
\newblock Biolabel-led research pattern positions the effects and mechanisms of {Sophorae Tonkinensis} radix et rhizome on lung diseases: A novel strategy for computer-aided herbal medicine research based on omics and bioinformatics.
\newblock {\em Computers in Biology and Medicine}, 136:104769, 2021.

\bibitem{DurianImproving2010}
Branislav Đurian, Jan Holub, Hannu Peltola, and Jorma Tarhio.
\newblock Improving practical exact string matching.
\newblock {\em Information Processing Letters}, 110(4):148--152, 2010.

\end{thebibliography}
%

\end{document}